\documentclass{llncs}

\usepackage{amsmath}
\usepackage{amssymb}
\usepackage{todonotes}
\usepackage[colorlinks=true, linkcolor=blue, urlcolor=blue, citecolor=blue]{hyperref}
\usepackage[capitalise]{cleveref}

\usepackage{tikz}
\usetikzlibrary{arrows,calc,topaths}
\tikzstyle{small}=[font=\footnotesize]
\tikzset{
    every picture/.style={>=stealth,auto,node distance=5cm},
    label/.style={font=\scriptsize},
    mgnode/.style={font=\boldmath},
}
\usepackage{graphicx}
\usepackage{thmtools}
\usepackage{thm-restate}

\usepackage[utf8]{inputenc}
\usepackage[
            backend=bibtex,
            doi=false,
            url=false,
            isbn=false,
            backref=false,
            style=numeric,
            firstinits=true,
            maxcitenames=5, 
            maxnames=5,
            ]{biblatex}
\addbibresource{main.bib}
\DeclareFieldFormat[report]{title}{\mkbibquote{#1\isdot}}

\usepackage{wrapfig}

\usepackage{centernot}
\usepackage{color}
\usepackage{tikz}
\usepackage{yfonts}

\newcommand{\Act}{\mathit{Act}}
\newcommand{\ignore}[1]{}

\newcommand{\ol}[1]{\overline{#1}}

\newcommand{\fbound}{c}

\newcommand{\MCOL}[1]{#1}

\newcommand{\tinc}{\MCOL{trace inclusion}}
\newcommand{\Tinc}{\MCOL{Trace inclusion}}
\newcommand{\tincp}{\MCOL{trace inclusion problem}}

\newcommand{\tuniv}{\MCOL{trace universality}}
\newcommand{\Tuniv}{\MCOL{Trace universality}}


\newcommand{\OCA}{\MCOL{OCA}}

\newcommand{\textOCAs}{\MCOL{one-counter automata}}

\newcommand{\TextOCAs}{\MCOL{One-counter automata}}
\newcommand{\DOCA}{\MCOL{DOCA}}

\newcommand{\OCN}{\MCOL{OCN}}
\newcommand{\OCNs}{\MCOL{OCNs}}
\newcommand{\textOCN}{\MCOL{one-counter net}}
\newcommand{\textOCNs}{\MCOL{one-counter nets}}

\newcommand{\TextOCNs}{\MCOL{One-counter nets}}
\newcommand{\DOCN}{\MCOL{DOCN}}

\newcommand{\textDOCNs}{\MCOL{deterministic one-counter nets}}

\newcommand{\CM}{\MCOL{CM}}
\newcommand{\textCM}{\MCOL{counter machine}}
\newcommand{\textCMs}{\MCOL{counter machines}}

\newcommand{\TextCMs}{\MCOL{Counter machines}}

\newcommand{\textDPDAs}{\MCOL{deterministic pushdown automata}}

\newcommand{\incCM}{\MCOL{ICM}}
\newcommand{\textIncCM}{\MCOL{incrementing counter machine}}


\newcommand{\N}{\mathbb{N}}

\newcommand{\Z}{\mathbb{Z}}

\newcommand{\x}{\times}
\renewcommand{\O}{\mathcal{O}}




\newcommand{\Xstep}[4]{\ensuremath{\,{\stackrel{#1}{#4}}\!{}^{\scriptstyle{#2}}_{\scriptstyle{#3}}}\,}
\newcommand{\notXstep}[4]{\ensuremath{\,{\centernot{\stackrel{#1}{#4}}\!{}}^{\scriptstyle{#2}}_{\scriptstyle{#3}}}\,}
\newcommand{\Step}[3]{\Xstep{#1}{#2}{#3}{\longrightarrow}} 
\newcommand{\notStep}[3]{\notXstep{#1}{#2}{#3}{\longrightarrow}}
\newcommand{\step}[1]{\Step{#1}{}{}}

\newcommand{\WStep}[3]{\Xstep{#1}{#2}{#3}{\Longrightarrow}}

\newcommand{\wstep}[1]{\WStep{#1}{}{}}







\newcommand{\norm}[1]{|#1|}
\newcommand{\norminf}[1]{\norm{#1}_{\infty}}
\newcommand{\bottom}{\ensuremath{\bot}}

\newcommand{\PSPACE}{PSPACE}

\newcommand{\NN}[1]{{\ensuremath{\cal #1}}}
\newcommand{\AN}[1]{{\ensuremath{\cal #1}}}

\newcommand{\effect}[1]{\Delta_{#1}}

\newcommand{\guard}[1]{\Gamma_{#1}}

\newcommand{\SP}{\NN{A}}
\newcommand{\DUP}{\NN{B}}
\newcommand{\prefix}[2]{{}^{#1}#2}
\newcommand{\deffect}[1]{\effect{\DUP}(#1)}
\newcommand{\seffect}[1]{\effect{\SP}(#1)}
\newcommand{\dguard}[1]{\guard{}'(#1)}
\newcommand{\sguard}[1]{\guard{}(#1)}

\newcommand{\Rule}[3]{
            \begin{minipage}[t]{#1}
                \begin{minipage}[t]{\linewidth}
                    #2
                \end{minipage}
                \hrule
                \begin{minipage}[t]{\linewidth}
            #3
                \end{minipage}
            \end{minipage}
}
\newcommand{\Ruledef}[5]{
                \begin{minipage}[c]{#3}
                    \hspace{1cm}{\bf #2}
                    \Rule{#3}{#4}{#5}
                \end{minipage}
}

\newcommand{\RulePL}{{\textit{\textbf{UUL}}}}
\newcommand{\RulePR}{{\textit{\textbf{UUR}}}}
\newcommand{\RuleNL}{{\textit{\textbf{DDL}}}}
\newcommand{\RuleNR}{{\textit{\textbf{DDR}}}}
\newcommand{\RulePN}{{\textit{\textbf{UD}} }}
\newcommand{\Type}[1]{{\mathit{Type}(#1)}}


\newenvironment{proofsketch}{\begin{proof}[sketch]}{\end{proof}}

\title{Trace Inclusion for One-Counter Nets Revisited}
\author{Piotr Hofman\inst{1}\and Patrick Totzke\inst{2}}
\institute{
University of Bayreuth, Germany
\and
LaBRI, Universit\'e de Bordeaux
}


\begin{document}
    \maketitle

    \begin{abstract}
      \TextOCNs\ (\OCN) consist of a nondeterministic finite control
and a single integer counter that cannot be fully tested for zero.
They form a natural subclass of both One-Counter Automata, which allow
zero-tests and Petri Nets/VASS, which allow multiple such weak
counters.
The \tincp\ has recently been shown to be undecidable
for \OCN.
In this paper, we contrast the complexity of two natural restrictions
which imply decidability.

First, we show that
\tinc\ between an \OCN\ and a \emph{deterministic} \OCN\
is NL-complete,
even with arbitrary binary-encoded initial counter-values as part of the input.
%
Secondly, we show
Ackermannian completeness of for the trace \emph{universality}
problem of nondeterministic \OCN.
This problem is equivalent to checking \tinc\ between a finite
and a \OCN-process.

    \end{abstract}
   
    \section{Introduction}
    A fundamental
question
in formal verification is
if the behaviour of one process can be reproduced by --
or equals that of -- another given process.
These inclusion and equivalence problems, respectively
have been studied for various
notions of behavioural preorders and equivalences
and for many
computational models.
\Tinc/equivalence asks if the set of \emph{traces}, all emittable
sequences of actions, of one process is contained in/equal to
that of another.
Other than for instance Simulation preorder, \tinc\
lacks a strong locality of failures, which
makes this problem intractable or even undecidable
already for very limited models of computation.

We consider \textOCNs, which
consist of a finite control and a single
integer counter that cannot be fully tested for zero,
in the sense that an empty counter can only restrict
possible moves.
They are subsumed by \TextOCAs\ (\OCA) and thus Pushdown Systems,
which allow explicit zero-tests by reading a bottom marker on the stack.
At the same time, \OCN\ are a subclass of Petri Nets
or Vector Addition Systems with states (VASS):
they are exactly the one-dimensional VASS and thus equivalent to
Petri Nets with at most one unbounded place.

%

\emph{Related work}.
\Textcite{VP1975} show the decidability of the trace equivalence problem for
\emph{deterministic} \textOCAs\ (\DOCA).
This problem has recently been shown to be NL-complete by
\textcite{BGJ2013}, assuming fixed initial counter-values.
%
The equivalence of \textDPDAs\ is known to
be decidable \cite{Sen1997} and
primitive recursive \cite{Sti2002},
but the exact complexity is still open.

\Textcite{Val1973} proves the undecidability of both \tinc\ for
\DOCA\ and universality for nondeterministic \OCA.
%
%
%
\Textcite{JEM1999} consider
\tinc\ between Petri Nets and finite systems and prove
decidability in both directions.
\Textcite{Jan1995}
showed that \tinc\ becomes undecidable if one compares
processes of Petri Nets with at least two unbounded places.
In \cite{HMT2013}, the authors show that
\tinc\ is undecidable already for (nondeterministic) \textOCNs.
Simulation preorder however,
is known to be decidable
and \PSPACE-complete for 
this model \cite{AC1998,JKM2000,HLMT2013},
which implies a \PSPACE\ upper bound for \tinc\
on \DOCN\ as \tinc\ and simulation coincide
for deterministic systems.

%
\Textcite{HWT1996} compare the classes of \emph{languages} 
defined by \DOCN\ with various acceptance modes
and in a series of papers consider the respective inclusion problems.
%
They derive procedures that exhaustively search for a bounded witness
that work in time and space polynomial in the size of the automata
if the initial counter-values are fixed.
We show that for monotone relations like \tinc\ or the inclusion
of languages defined by acceptance with final states,
one can speed up the search for suitable witnesses.

\emph{Our contribution}.
We fix the complexity of two well-known decidable 
decision problems regarding the traces of one-counter processes.

First, we show that \tinc\ between \emph{deterministic} \textOCN\
is NL-complete. Our upper bound holds even if only the supposedly
larger process is deterministic and if (binary encoded) initial
counter-values are part of the input.
This matches
the trivial NL lower bound derived from DFA universality.
Our technique uses short certificates for the existence of (possibly
long) distinguishing traces. The sizes of certificates are polynomial
in the number of states of the finite control and they can be verified
in space logarithmic in the binary representation of the initial
counter-values.

Our second result is that \tuniv\ of \emph{nondeterministic} \OCN\
is Ackermann-complete.
This problem can be easily seen to be (logspace) inter-reducible
with checking \tinc\ between a finite process and a process of a \OCN.
%

    \section{Background}
    We write $\N$ for the set of non-negative integers.
For any set $A$, let $A^*$ denote the set of finite strings over $A$
and $\varepsilon\in A^*$ the empty string.
\begin{definition}[One-Counter Nets]
    A \emph{\textOCN} (\OCN) is given as triple $\NN{N}=(Q,\Act,\delta)$
    where $Q$ is a finite set of
    control-states, $\Act$ is a finite set of action labels and
    $\delta\subseteq Q\x \Act\x\{-1,0,1\}\x Q$ is a set of transitions,
    each written as $p\step{a,d} p'$.
    A \emph{process} of $\NN{N}$ consists of a state $p\in Q$
    and a counter-value $m\in\N$. We will simply write $pm$
    for such a pair.
    Processes can evolve according to the transition rules of the net:
    For any $a\in\Act$, $p,q\in Q$ and $m,n\in\N$
    there is a step $pm\step{a}qn$ iff there exists
    $(p\step{a,d}q)\in\delta$ such that
    \begin{equation}
        n=m+d\ge0.
    \end{equation}

   The net $\NN{N}$ is \emph{deterministic} (a \DOCN) if for every $p\in Q$ and $a\in\Act$, there
    is at most one transition $(p,a,d,q)\in\delta$.
    It is \emph{complete} if for every $p\in Q$ and $a\in \Act$
    at least one transition $(p,a,d,q)\in\delta$ exists.
\end{definition}


In this paper we will w.l.o.g.~consider input nets in a certain normal form.
Specifically,
we assume what are sometimes called \emph{realtime}
automata, in which no silent ($\varepsilon$-labelled) transitions are present.
In the absence of zero-tests, the usual syntactic restriction
for deterministic pushdown automata, 
that no state with outgoing $\varepsilon$-transition
may have outgoing transitions labelled by $a\neq\varepsilon$
implies that all states on $\varepsilon$-cycles are essentially deadlocks
and one can eliminate $\varepsilon$-labelled transitions in logarithmic space.

\begin{definition}[Traces]
    Let $pm$ be a process of the \OCN\ $\NN{N}$.
  The \emph{traces} of $pm$ are the elements of the set
  \begin{equation*}
      T_\NN{N}(pm)=\{a_1a_2\dots a_k\in\Act^*\ |\
          \exists qn\  pm\step{a_0}\circ\step{a_1}\circ\dots\circ\step{a_k}qn\}.
  \end{equation*}
  We will omit the index $\NN{N}$ if is clear from the context.
  \emph{\Tinc} is the decision problem that asks if $T_\NN{A}(pm)\subseteq
  T_\NN{B}(p'm')$
  holds for given processes $pm$ and $p'm'$ of
  nets $\NN{A}$ and $\NN{B}$, respectively.
  \emph{\Tuniv} asks if $\Act^*\subseteq T(\alpha)$ holds for a given
  process $pm$.
\end{definition}

An important property of \textOCNs\ is that the step relation
and therefore also \tinc\ is monotone with respect to the counter:
\begin{lemma}[Monotonicity]\label{lem:monotonicity}
If $pm\step{a}p'm'$ then $p(m+1)\step{a}p'(m'+1)$.
This in particular means that $T(pm)\subseteq T(p(m+1))$
holds for any \OCN-process $pm$.
\end{lemma}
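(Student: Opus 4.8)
The plan is to prove the single-step claim directly from the definition of the step relation and then lift it to traces by a straightforward induction on trace length.

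First I would establish the single-step statement. Suppose $pm\step{a}p'm'$. By definition of the step relation there is a transition $(p\step{a,d}p')\in\delta$ with $m'=m+d\ge0$. The very same transition witnesses a step from $p(m+1)$: reading $a$ changes the counter by $d$, yielding the value $(m+1)+d=(m+d)+1=m'+1$. It remains only to check the nonnegativity side-condition, but $m'+1\ge 1>0$ follows immediately from $m'\ge 0$. Hence $p(m+1)\step{a}p'(m'+1)$, using the identical transition rule. Note that the guard $n\ge0$ can only become easier to satisfy when the counter is larger, which is precisely the source of monotonicity.

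For the trace inclusion $T(pm)\subseteq T(p(m+1))$, take any $w=a_1\dots a_k\in T(pm)$, witnessed by a run $pm=p_0m_0\step{a_1}p_1m_1\step{a_2}\dots\step{a_k}p_km_k$. I would show by induction on $i$ that $p_0(m_0+1)$ admits a run on the prefix $a_1\dots a_i$ ending in $p_i(m_i+1)$. The base case $i=0$ is trivial. For the inductive step, the original run provides $p_im_i\step{a_{i+1}}p_{i+1}m_{i+1}$, and applying the single-step claim to this step gives $p_i(m_i+1)\step{a_{i+1}}p_{i+1}(m_{i+1}+1)$, which extends the mimicking run by one transition. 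After $k$ steps we obtain a run of $p(m+1)$ labelled by $w$, so $w\in T(p(m+1))$.

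There is no real obstacle here; the only point requiring attention is the invariant maintained throughout the induction, namely that each configuration in the simulating run carries a counter value exactly one greater than its counterpart in the original run. This uniform offset guarantees that every guard satisfied along the original run remains satisfied, so no step of the simulation can ever be blocked by the nonnegativity condition.
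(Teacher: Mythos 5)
Your proof is correct: the single-step claim follows directly from reusing the same transition (the guard $m'+1\ge 0$ is trivially satisfied), and the induction on trace length is the standard lifting. The paper states this lemma without proof precisely because it is this immediate consequence of the definitions, so your argument is exactly the intended one, just written out in full.
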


The next lemma justifies our focus on processes of complete \OCN. 
The proof is a simple construction and can be found in \cref{app:reduction}.
The idea is to first determinize $\NN{A}$ by consistently relabelling all transitions
of $\NN{A}$ and $\NN{B}$, and then complete the net $\NN{B}$ by introducing
a sink state.
\begin{restatable}[Normal Form Assumption]{lemma}{lemreduction}
\label{lem:reduction}
    \Tinc\ for \OCN\ is logspace-reducible to \tinc\ between a determinisic and
    a complete \OCN.
    More precisely, given \OCNs\ $\NN{A}$ and $\NN{B}$
    with state sets $N$ and $M$,
    one can construct a
    \DOCN\
    $\NN{A'}$ with states $N$
    and a complete \OCN\ $\NN{B'}$
    with states $M'\supseteq M$
    such
    that the following holds for
    any two processes $pm$ and $qn$ of $\NN{A}$ and $\NN{B}$,
    respectively:
    \begin{equation}
        T_\NN{A}(pm)\subseteq T_\NN{B}(qn) 
        \iff T_\NN{A'}(pm)\subseteq T_\NN{B'}(qn). 
    \end{equation}
    Moreover,
    the constructed net $\NN{B'}$ is deterministic if the original net $\NN{B}$ is.
\end{restatable}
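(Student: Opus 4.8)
The plan is to establish the two directions of the claimed reduction separately, constructing the nets $\NN{A'}$ and $\NN{B'}$ by two independent modifications that do not interfere with each other. The key observation is that \tinc\ only cares about which sequences of action labels are emittable, so we are free to \emph{rename} labels as long as we do so consistently in both nets. First I would handle the determinization of $\NN{A}$. The issue is that $\NN{A}$ may have several transitions $p\step{a,d}q$ and $p\step{a,d'}q'$ sharing a source state $p$ and label $a$. I would make the net deterministic by attaching to each label $a$ enough extra information to pick out a unique transition from any state: concretely, replace each transition $(p,a,d,q)$ by $(p,a',d,q)$ where $a' = (a,i)$ and $i$ indexes the transition among those leaving $p$ on $a$ (or, more simply, let the new label record the pair $(a, (d,q))$). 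Applying exactly the same relabelling to \emph{every} transition of both $\NN{A}$ and $\NN{B}$ (regardless of source state) turns $\NN{A}$ into a \DOCN\ $\NN{A'}$ on the same state set, and transforms $\NN{B}$ into a net $\NN{B'}_0$ over the enlarged alphabet; crucially the relabelling is a bijection on the alphabet restricted to each state's outgoing choices, so it induces a bijection between $T_\NN{A}(pm)$ and $T_\NN{A'}(pm)$ and likewise for $\NN{B}$, preserving the inclusion in both directions.

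Next I would complete the (already relabelled) net $\NN{B}$. Completeness fails whenever some state $p$ has no outgoing transition on some label $a\in\Act$. I would add a fresh \emph{sink} state $s$ with self-loops $s\step{a,0}s$ for every $a\in\Act$, so that from $s$ every trace in $\Act^*$ is available, and for each missing pair $(p,a)$ add a transition $p\step{a,0}s$. Because the sink accepts all of $\Act^*$, adding these transitions can only enlarge $T_\NN{B}(qn)$, and it enlarges it precisely by the sequences that $\NN{B}$ was previously unable to continue. The point is that this addition does \emph{not} change the truth value of the inclusion: if $\NN{A}$ emits a trace $w$, then since $\NN{A'}$ is deterministic, $w$ corresponds to a unique run, and $w\in T_\NN{B'}(qn)$ in the completed net iff it was already in $T_\NN{B}(qn)$ \emph{or} the run in $\NN{B}$ got stuck and then diverted to the sink — but the latter cannot help witness a failure of inclusion, since we only ever ask whether traces of $\NN{A}$ are \emph{contained} in those of $\NN{B}$. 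I would verify that for every $w$, $w\in T_{\NN{B}}(qn)$ iff $w\in T_{\NN{B'}}(qn)$ \emph{after restricting back to prefixes realizable in the original $\NN{B}$}, and argue that the sink only supplies traces whose strict prefixes already left the original net, hence never affects whether a given $\NN{A'}$-trace is matched.

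The main obstacle I anticipate is making the completion argument fully rigorous, namely showing that introducing the sink does not spuriously validate the inclusion. One must check that the only new traces contributed by $\NN{B'}$ are exactly those obtained by taking an original $\NN{B}$-run to a dead end and then padding arbitrarily; and that for the inclusion $T_\NN{A'}(pm)\subseteq T_\NN{B'}(qn)$ this is harmless because any trace of $\NN{A'}$ that is matched only via the sink would already have a matching \emph{prefix} in the original $\NN{B}$, so no genuine counterexample is lost or created. I would formalize this by an induction on trace length, tracking the step relation and invoking the determinism of $\NN{A'}$ to keep the correspondence of runs one-to-one. The final sentence of the lemma — that $\NN{B'}$ stays deterministic when $\NN{B}$ is — follows because both the relabelling and the sink-completion assign at most one transition per state–label pair: the relabelling refines labels and so cannot create conflicts, and the sink transitions are added only for pairs $(p,a)$ that were previously \emph{missing}, hence never collide with existing ones.
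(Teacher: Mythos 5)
Your determinization step is essentially the paper's: relabel the transitions of $\NN{A}$ injectively and duplicate every transition of $\NN{B}$ under the corresponding new labels (your word ``bijection'' between $T_\NN{A}(pm)$ and $T_\NN{A'}(pm)$ is inaccurate when $\NN{A}$ is nondeterministic --- one trace of $\NN{A}$ refines to several traces of $\NN{A'}$ --- but the projection/refinement correspondence is easily repaired). The completion step, however, is genuinely wrong: your sink $s$ carries \emph{zero-effect} self-loops, so it is trace-universal, and this breaks the right-to-left direction of the claimed equivalence. Concretely, let $\NN{A}$ have a single state $p$ with the loop $p\step{b,0}p$ and let $\NN{B}$ have a single state $q$ with the loop $q\step{a,0}q$ and no $b$-transition, over $\Act=\{a,b\}$. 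Then $T_\NN{A}(p0)=b^*\not\subseteq a^*=T_\NN{B}(q0)$, yet your completion adds $q\step{b,0}s$ and universal loops at $s$, giving $T_\NN{B'}(q0)=\{a,b\}^*$, so $T_\NN{A'}(p0)\subseteq T_\NN{B'}(q0)$ holds. The sentence you flag as the ``main obstacle'' --- that traces matched only via the sink ``never affect whether a given $\NN{A'}$-trace is matched'' --- is exactly false: the sink matches precisely those $\NN{A'}$-traces whose $\NN{B}$-run dies at a missing transition, and these were legitimate witnesses of non-inclusion. Your fallback of comparing $T_\NN{B}$ and $T_\NN{B'}$ only ``after restricting back to prefixes realizable in the original $\NN{B}$'' is not available either, since the lemma must equate the two \emph{unrestricted} inclusions.

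The paper avoids this with two coupled ingredients that your construction lacks. First, its sink $L$ has counter-\emph{decreasing} loops $(L,a,-1,L)$ for every action, so a process trapped in $L$ with counter value $n''$ survives only $n''$ further steps; the sink is a delayed deadlock, not a universal state. Second, a fresh padding symbol $\$$ gets zero-effect self-loops at every state of $\NN{A'}$ and at every \emph{original} state of $\NN{B'}$. Then an original witness $w$ of length $k$ yields a primed witness $w\$^{k'}$ for suitable $k'$: every $w$-labelled path of $\NN{B'}$ from $q'n'$ ends in $L$ with counter at most $n'+k$ and is exhausted by the $\$$-padding, while $\NN{A'}$ loops on $\$$ for free; conversely, a shortest primed witness has the form $w'\$^{k}$ with $w'$ free of $\$$, and $w'$ is then a witness for the original nets. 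In the counterexample above this works because $b\$^{m}$ for large $m$ remains a witness. Without both the decrement in the sink and the padding letter, no variant of ``completion by a universal sink'' can preserve non-inclusion, so the flaw is in the approach, not merely in the write-up.
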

Due to the undecidability of \tinc\ for \OCN\ \cite{HMT2013},
a direct consequence of \cref{lem:reduction} is that \tinc\
$T_\NN{A}(pm)\subseteq T_\NN{B}(qn)$
is already undecidable if we allow
the net $\NN{B}$ to be nondeterministic.
Unless otherwise stated, we will from now on assume a
\DOCN\ $\NN{A}=(Q_A,\Act,\delta_A)$ and
a complete \DOCN\ $\NN{B}=(Q_B,\Act,\delta_B)$.

    \section{Trace Inclusion for Deterministic One-Counter Nets}
    We characterize
witnesses for non-inclusion
$T_\NN{A}(pm)\not\subseteq T_\NN{B}(qn)$,
starting with
some notation to express
paths and their effects.

%
\begin{definition}[\OCN\ Paths] 
    \label{def:ocnpaths}
    Consider the OCN $\NN{N}=(Q,\Act, \delta)$.
    For the transition $t=(p,a,d,p')\in \delta$
    we write $source(t)=p$, $target(t)=p'$ and $\Delta(t)=d$ 
    for its source and target states and counter-effect, respectively.
    A \emph{path} in $\NN{N}$ 
    is a sequence $\pi=t_0t_1\dots t_k\in \delta^*$ of transitions where
    $target(t_i)=source(t_{i+1})$ for every $i<k$.
    Let $\prefix{i}{\pi}$ denote its prefix of length $i$.
    The \emph{effect} $\effect{}(\pi)$ and \emph{guard} $\guard{}(\pi)$
    of $\pi$ are
    \begin{equation*}
        \effect{}(\pi) = \sum_{i=0}^k \Delta(t_i)
        \qquad\text{and}\qquad
        \guard{}(\pi) = - \min\{\effect{}(\prefix{i}{\pi})\ |\ 0\le i\le k\}.
    \end{equation*}
    The path $\pi$ is \emph{enabled} in process $pm$ (write $pm\step{\pi}$)
    if $\guard{}(\pi)\le m$. The source and target nodes of $\pi$ are those
    of its first and last transition, respectively.
    We write $pm\step{\pi}p'm'$ if $\pi$ takes $pm$ to $p'm'$,
    i.e., if $pm\step{\pi}$, $target(\pi)=p'$ and $m'=m+\effect{}(\pi)$.
\end{definition}

The guard $\guard{}(\pi)$ is the minimal counter-value
that is sufficient to
traverse the path $\pi$ while
maintaining a non-negative counter-value along the way.
This value is always non-negative.

Notice that
the absolute values of the effect and guard of a path
are bounded by its length.
We consider the synchronous product of the control graphs of
two given \textDOCNs.
\begin{definition}[Product Paths]
    The \emph{product}
    of nets $\NN{A}$ 
    and $\NN{B}$ 
    is the finite
    graph
    with nodes 
    $V=Q_\SP\x Q_\DUP$ and $(\Act\x\{-1,0,1\}\x\{-1,0,1\})$-labelled edges
    $E$, where 
    \begin{equation*}
      (p,q)\xrightarrow{a,d_\SP,d_\DUP}(p',q')\in E
      \text{ iff } p\xrightarrow{a,d_\SP}p' \in\delta_\SP
      \text{ and } q\xrightarrow{a,d_\DUP}q'\in \delta_\DUP.
    \end{equation*}

    A \emph{path} in the product is a sequence
    $\pi=T_0T_1\dots T_k\in E^*$
    and defines paths $\pi_\SP$ 
    and $\pi_\DUP$ 
    in nets $\NN{A}$ and $\NN{B},$ respectively.
    It is \emph{enabled} in $(pm,qn)$ 
    if $\pi_\SP$ and $\pi_\DUP$
    are enabled in $pm$ and $qn,$ respectively. 
    In this case we write
    $(pm,qn)\step{\pi}(p'm',q'n')$ to mean that
    $pm\step{\pi_\SP}p'm'$
    and $qn\step{\pi_\DUP}q'n'$.
    %
    We lift the definitions of source and target nodes 
    to paths in the product:
    $source(\pi)=(source(\pi_A),source(\pi_B))\in V$,
    $target(\pi)=(target(\pi_A),target(\pi_B))\in V$.
    Moreover, write
    $\effect{A}(\pi)$,
    $\effect{B}(\pi)$,
    $\guard{A}(\pi)$ and 
    $\guard{B}(\pi)$
    for the effects and guards of $\pi$ in nets $\NN{A}$ and $\NN{B},$
    respectively.
\end{definition}


Since both $\NN{A}$ and $\NN{B}$ are deterministic and
$\NN{B}$ is complete,
a trace $w\in T_\NN{A}(pm)$ %
uniquely determines a path from state $(p,q)$ in their product.
We therefore identify witnesses for non-inclusion
with the 
paths they induce in the product.
\begin{definition}[Witnesses]
    Assume $T_{\NN{A}}(pm)\not\subseteq T_{\NN{B}}(qn)$ for processes $pm$ and $qn$ of
  $\NN{A}$ and $\NN{B}$.
  A \emph{witness} for $(pm,qn)$ is a path $\pi$ in the product
  of $\NN{A}$ and $\NN{B}$
  such that
  $(pm,qn)\Step{\pi}{}{}(p'm',q'n')$
  and for some $a\in\Act$,
  $p'm'\step{a}$ but $q'n'\notStep{a}{}{}$.
\end{definition}

Every witness $\pi$ for $(pm,qn)$ 
completely exhausts the counter in the process of $\NN{B}$:
$(pm,qn)\Step{\pi}{}{}(p'm',q'0)$.
This is because a process of a complete net can only \emph{not} make an $a$-step
in case the counter is empty.

\begin{example}
    Consider two nets 
    given by 
    self-loops
    $p\step{a,0}p$
    and $q\step{a,-1}q,$ respectively.
    Their product is the cycle $L=(p,q)\xrightarrow{a,0,-1}(p,q)$
    with effects $\effect{A}(L)=0$ and $\effect{B}(L)=-1$.
    The only witness for $(pm,qn)$ for initial counter-values $m,n\in\N$ is
    $L^{n}$, which has length 
    polynomial in the sizes of the nets
    \emph{and} the initial counter-values, but not in the sizes
    of the nets alone.
\end{example}
The previous example shows that if binary-encoded initial counter-values
are part of the input, we can only bound the length of shortest witnesses
exponentially. 
However, we will see that it suffices to consider witnesses
of a certain regular form only.
This leads to
small certificates for non-inclusion,
which can be stepwise guessed and verified in
space
logarithmic
in the size of the nets.

A crucial ingredient for our characterization is the 
monotonicity of witnesses, a direct consequence of
the monotonicity
of the steps in \OCNs\ (\cref{lem:monotonicity}):
\begin{lemma}\label{lem:witness-monotonicity}
    If $\pi$ is a witness for $(pm,qn)$
    then for all $m'\ge m$ and $n'\le n$ some prefix of $\pi$
    is a witness for $(pm',qn')$.
\end{lemma}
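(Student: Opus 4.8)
The plan is to apply monotonicity (\cref{lem:monotonicity}) to the two sides of the product independently. Fix a witness $\pi$ for $(pm,qn)$, say $(pm,qn)\step{\pi}(p'\bar m,q'\bar n)$ with $p'\bar m\step{a}$ and $q'\bar n\notStep{a}{}{}$ for some $a\in\Act$; as already observed, completeness of $\NN{B}$ forces $\bar n=0$. Let $m'\ge m$ and $n'\le n$. First I would dispose of the $\NN{A}$-side, where nothing can go wrong: since $\guardA(\prefix{i}{\pi})\le\guardA(\pi)\le m\le m'$ for every prefix, each prefix of $\pi_\SP$ stays enabled in $pm'$, and by \cref{lem:monotonicity} reaches its target with a counter at least as large as the one reached from $pm$. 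Hence the only obstruction to replaying $\pi$ stems from the $\NN{B}$-side, where lowering the counter to $n'$ may disable some prefix of $\pi_\DUP$.

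Accordingly, I would let $j$ be the largest index such that $\prefix{j}{\pi}$ is enabled in $(pm',qn')$; this is well defined, as the empty prefix qualifies and the previous paragraph shows the bound is imposed by $\NN{B}$ alone. If $j=|\pi|$, then all of $\pi$ is enabled in $(pm',qn')$; running $\pi_\DUP$ from $qn'$ ends at counter $n'+\effectB(\pi)\le n+\effectB(\pi)=0$, which being non-negative must equal $0$, so the $\NN{B}$-process lands exactly in $q'0$ and still cannot perform $a$, while $\NN{A}$ reaches $p'$ with counter at least $\bar m$ and can perform $a$ by monotonicity. Thus $\pi$ itself is a witness for $(pm',qn')$.

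Otherwise $j<|\pi|$, and I would show that $\prefix{j}{\pi}$ is a witness. Let $b$ be the label of the $(j+1)$-st transition $T_j$ of $\pi$ and let $q''$ be the $\NN{B}$-state reached after $\prefix{j}{\pi}$. By maximality of $j$ the $\NN{B}$-component of $T_j$ is disabled at $q''$; since a transition of effect $0$ or $+1$ is always enabled, this component must have effect $-1$, forcing the $\NN{B}$-counter after $\prefix{j}{\pi}$ to be $0$. As $\NN{B}$ is deterministic and complete, the $\NN{B}$-component of $T_j$ is the unique $b$-transition out of $q''$, so $q''0\notStep{b}{}{}$. On the other side, the $\NN{A}$-component of $T_j$ was enabled when $\pi$ was replayed from $pm$, so by monotonicity (the $\NN{A}$-counter after $\prefix{j}{\pi}$ from $pm'$ dominates the one from $pm$) it remains enabled, yielding the required $b$-step. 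Hence $\prefix{j}{\pi}$ is enabled in $(pm',qn')$ and its target lets $\NN{A}$ move on $b$ but not $\NN{B}$.

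I expect the only delicate point to be the bookkeeping in this last case: arguing that the obstruction is necessarily a $-1$-labelled $\NN{B}$-transition meeting an empty counter, and then using determinism together with completeness of $\NN{B}$ to upgrade ``the specific $\NN{B}$-transition prescribed by $\pi$ is disabled'' to ``no $b$-transition is enabled at $q''$''. Everything else is a routine application of \cref{lem:monotonicity} and the monotonicity of guards under taking prefixes.
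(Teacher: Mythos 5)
Your proof is correct and takes essentially the same route the paper intends: the paper states \cref{lem:witness-monotonicity} without a separate proof, calling it a direct consequence of step monotonicity (\cref{lem:monotonicity}), and your replay argument --- the $\NN{A}$-side guards only improve with $m'\ge m$, while the $\NN{B}$-side can only fail on a $-1$-effect transition at counter zero, at which point determinism and completeness of $\NN{B}$ upgrade the disabled product transition to a disabled action --- is precisely the elaboration of that remark, matching the paper's surrounding observation that every witness exhausts the $\NN{B}$-counter. Your truncation at the maximal enabled prefix is exactly how the paper implicitly uses the lemma (e.g.\ in the proof of \cref{lem:witness_preservation}, where a \emph{prefix} of the rewritten path is claimed to be a witness), so nothing is missing.
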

%

The intuition behind the further characterization of witnesses
is that in order to show non-inclusion, one looks for a path
that is enabled
in the process of $\NN{A}$ and moreover exhausts the counter
in the process of $\NN{B}$.
Since any sufficiently long
path will revisit control-states in the product,
we can compare such paths with respect
to their effect on the counters and see that some are ``better'' than
others. For instance, a cycle that only increments the counter in  $\NN{B}$
and decrements the one in $\NN{A}$ is surely suboptimal considering our
goal to find a (shortest) witness.
The characterization \cref{thm:form} essentially states that
if a witness exists, then also one that, apart from short paths,
combines only the most productive cycles.

\newcommand{\TypeDD}{(<,<)}
\newcommand{\TypeUU}{(>,\ge)}
\newcommand{\TypeDU}{(\le,\ge)}
\newcommand{\TypeUD}{(\ge,<)}

\begin{definition}[Loops]
%
    A non-empty path $\pi$
    in the product
    is called a \emph{cycle} if $source(\pi)=target(\pi)$.
    Such a cycle is a \emph{loop} if none of its proper subpaths is a cycle.
    The \emph{slope} of loop $\pi$ is the ratio $S(\pi)=\effect{\SP}(\pi)/\effect{\DUP}(\pi)$,
    where for $n>0$ and $k\in\Z$ we let $n/0=\infty>k$, $0/0=0$ and $-n/0=-\infty<k$.
    Based on the effect of a loop we distinguish four types of loops: $\TypeDD,\TypeUU,\TypeDU$,
    and $\TypeUD$. The \emph{type} of $\pi$ is $Type(\pi)=(\blacktriangleleft,\blacktriangleright)$ iff
    $\effect{\SP}(\pi)\blacktriangleleft 0$ and $\effect{\DUP}(\pi)\blacktriangleright 0$.
\end{definition}

Note that no loop is longer than $|V|$
because it visits exactly one node twice.
\begin{example}
    \label{ex:witness}
    Consider two \DOCN\ such that their product is the graph depicted
    below, 
    where we identify transitions with their action labels for simplicity and

\noindent
\begin{minipage}{\linewidth}
    \begin{wrapfigure}[9]{r}{0.430\linewidth}
        \vspace{-0.9cm}
  \includegraphics[width=0.97\linewidth]{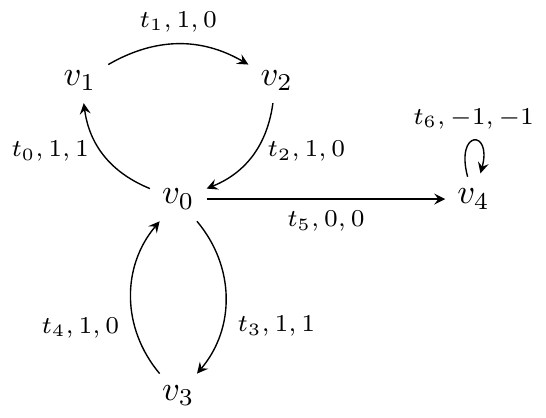}
    \end{wrapfigure}
    let $v_0=(p,p')\in V$.
    The paths $t_0t_1t_2$,
    $t_3t_4$ and $t_6$ are loops with
    slopes $3/1$, $2/1$ and $1/1$
    and types $\TypeUU$, $\TypeUU$ and $\TypeDD,$
    respectively.
  The path $(t_0t_1t_2)(t_3t_4)^9t_5(t_6)^{20}$ is a witness for $(p0,p'10)$
    of length $42$.
    By replacing $8$ occurrences of the loop $(t_3t_4)$ with
    $(t_0t_1t_2)^8$ we derive the longer witness
    $(t_0t_1t_2)^9(t_3t_4)t_5(t_6)^{20}$, which has
    essentially the same structure but is more efficient in the sense
    that for the same effect
    on $\NN{B}$ it achieves a higher counter-effect on $\NN{A}$.
%

%
\end{minipage}
\end{example}
%
\begin{restatable}[]{theorem}{thmform}
\label{thm:form}
    Fix a \DOCN\ $\NN{A}$, 
    a complete \DOCN\ $\NN{B}$, 
    and
    let $K\in\N$ be the number of nodes in their product.
    There is a bound $c\in\N$ that depends polynomially on $K$,
    such that
    the following holds
    for any two processes $pm$ and $qn$ of $\NN{A}$ and $\NN{B}$.
    If $T(pm)\not\subseteq T(qn)$,
    then there is a witness for $(pm,qn)$
    that is either no longer than $c$ or has
    one of the following forms:
    \begin{enumerate}
    \item $\pi_0L_0^{l_0}\pi_1$, where
            $L_0$ is a loop of type $\TypeUD$ and $\pi_0,\pi_1$ are no longer than $c$,
        \item $\pi_0L_0^{l_0}\pi_1L_1^{l_1}\pi_2$, where
            $L_0$ and $L_1$ are loops of type $\TypeUU$ and $\TypeDD$ with $S(L_0) > S(L_1)$
            and $\pi_0,\pi_1,\pi_2$ are no longer than $c$,
        \item $\pi_0L_0^{l_0}\pi_1$, where
            $L_0$ is a loop of type $\TypeDD$
            and $\pi_0,\pi_1$ are no longer than $c$,
    \end{enumerate}
    where in all cases, the number of iterations $l_0,l_1\in\N$
    are polynomial in $K$ and the initial
    counter-values $m$ and $n$ of the given processes.
\end{restatable}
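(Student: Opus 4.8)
The plan is to begin with an arbitrary witness $\pi$ for $(pm,qn)$ and, whenever it is long, rewrite it into one of the three normal forms while keeping it a witness. The guiding principle is the observation made just before the theorem: a witness is precisely a path that stays enabled in $\NN{A}$ (its counter never drops below $0$) and drives the counter of $\NN{B}$ down to $0$, ending in a product node where some action is available in $\NN{A}$ but blocked in $\NN{B}$. If $|\pi|\le c$ there is nothing to do, so assume $\pi$ is long. Since no loop is longer than $K$, a long path must revisit product nodes, and I would first decompose $\pi$ into a simple (acyclic) backbone of length below $K$ together with a multiset $\mathcal{L}$ of loops hung off it.

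Next I would discard useless loops. A loop of type~$\TypeDU$ satisfies $\effect{\SP}\le 0$ and $\effect{\DUP}\ge 0$, so deleting it only raises the counter of $\NN{A}$ and lowers that of $\NN{B}$ along the rest of the path; by monotonicity of steps and of witnesses (\cref{lem:monotonicity,lem:witness-monotonicity}) the shortened path is still a witness. Hence I may assume $\mathcal{L}$ uses only loops of types~$\TypeUU$, $\TypeDD$ and $\TypeUD$. If some loop $L_0$ of type~$\TypeUD$ occurs it is ideal, as it does not decrease $\NN{A}$ yet strictly decreases $\NN{B}$: iterating it alone along a backbone through its base node empties $\NN{B}$ while $\NN{A}$ never falls, which is form~(1).

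Otherwise $\mathcal{L}$ contains only boosters (type~$\TypeUU$, raising $\NN{A}$ but possibly charging $\NN{B}$) and reducers (type~$\TypeDD$, lowering $\NN{B}$ at a cost to $\NN{A}$). Reading each loop as the vector $(\effect{\SP},\effect{\DUP})\in\Z^2$, the net effect of $\mathcal{L}$ is a non-negative combination of such vectors, and a Carath\'eodory-type argument in the plane shows that the extremal trade-off between raising $\NN{A}$ and lowering $\NN{B}$ is already realised by two loops of extremal slope, namely the steepest booster $L_0$ and the shallowest reducer $L_1$. Placing the booster first, the counter of $\NN{A}$ is charged to $m+l_0\effect{\SP}(L_0)$ before it is spent, and emptying $\NN{B}$ fixes $l_1$ through $l_1|\effect{\DUP}(L_1)|=n+l_0\effect{\DUP}(L_0)$; substituting, the final counter of $\NN{A}$ grows with $l_0$ exactly when $S(L_0)>S(L_1)$, which is therefore the regime in which boosting restores feasibility and yields form~(2). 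Feasibility of this two-loop form is inherited from that of $\pi$, since passing to the two extremal loops cannot worsen the achievable $(\NN{A},\NN{B})$-trade-off; and when the initial counter of $\NN{A}$ already suffices no booster is needed, giving form~(3). In every case the iteration counts are bounded by the counter-values to be moved divided by the per-loop effect, hence polynomial in $K$, $m$ and $n$ because each loop has effect at most $K$.

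The main obstacle I anticipate is not the arithmetic of slopes but path-realisability: collapsing a multiset of loops sitting at many different base nodes into one or two iterated loops $L_0^{l_0}$, $L_1^{l_1}$ at fixed nodes, joined by short paths, while respecting the guards, i.e.\ non-negativity of both counters at every intermediate step rather than only at the endpoints. I would control the guard along $\NN{A}$ by the front-loading order above (charge before spend) together with monotonicity, and handle connectivity by extracting the short connecting sub-paths between the chosen base nodes directly from $\pi$, absorbing their bounded effects and guards into the constant $c$.
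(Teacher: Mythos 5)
Your proposal takes a genuinely different route from the paper --- a global selection argument (backbone plus multiset of loops, Carath\'eodory in the plane) instead of the paper's local path-rewriting system --- and it correctly identifies the extremal trade-off: the steepest \TypeUU\ loop and shallowest \TypeDD\ loop match the paper's condition $S(L_0)>S(L_1)$, and the elimination of \TypeDU\ loops via \cref{lem:monotonicity,lem:witness-monotonicity} mirrors the paper exactly. But there is a genuine gap, and it sits precisely at the obstacle you flagged: realisability. The theorem requires $\pi_0,\pi_1,\pi_2$ to be no longer than $c$, polynomial in $K$ alone, and your plan to obtain them by ``extracting the short connecting sub-paths between the chosen base nodes directly from $\pi$'' does not work, because the segments of $\pi$ between the two chosen loops are \emph{not} short: they contain all the other loops of your multiset, iterated with multiplicities that a priori depend on $m$ and $n$ (exponential in the binary input). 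To shorten these segments you must delete or truncate those intermediate loops, and this is not guard-safe: removing a counter-increasing loop $L$ from a prefix $\alpha L\beta$ raises the guard of everything after it by $\effect{}(L)$, so the acyclic connector need not be enabled from $m$ even though the original prefix was --- for instance when the steepest \TypeUU\ loop only becomes traversable after a shallower \TypeUU\ loop has charged $\NN{A}$'s counter. Monotonicity only helps when counters go up, and ``front-loading'' cannot move a loop before the point at which the path can afford to reach and iterate it. Bounding the multiplicities of all \emph{non-dominant} loops is exactly the crux, and it is where the paper spends its entire effort: five rewriting rules with quantitative side conditions (e.g.\ retaining $l_0-x>|\pi_1L_1|$ iterations when shifting weight from a shallow earlier loop to a steeper later one) that provably preserve witness-hood (\cref{lem:witness_preservation}), a termination argument via a lexicographic weight (\cref{lem:rules_wqo}), and the resulting polynomial bounds on all but the dominant exponents (\cref{lem:bounds}), which let the boundedly-iterated loops be absorbed into the short paths $\pi_0,\pi_1,\pi_2$. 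Because the paper only ever adjusts exponents inside an actual witness and never reorders or relocates loops, realisability is preserved at every step and never has to be re-proved.

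A second, smaller gap compounds the first: your balance equation $l_1\cdot|\effect{\DUP}(L_1)| = n + l_0\cdot\effect{\DUP}(L_0)$ must hold exactly over the integers, since completeness of $\NN{B}$ forces its counter to hit exactly $0$ at the blocking step after staying non-negative throughout. For a template synthesized from scratch, with fixed loops and fixed short connectors, there is no reason an integer solution $(l_0,l_1)$ exists; divisibility can fail, and repairing it requires perturbing the template by bounded numbers of further loop iterations --- again an argument your sketch does not supply. The paper sidesteps this entirely because its final path is obtained from a genuine witness by effect-preserving surgery on exponents (in form~2, increments $(x,y)$ with $\deffect{L_0}\cdot x=-\deffect{L_1}\cdot y$ leave $\NN{B}$'s total effect unchanged while raising $\NN{A}$'s). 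In short: the Carath\'eodory selection recovers the statement's \emph{shape} but cannot replace the exponent-bounding mechanism, which is the actual mathematical content of the theorem.
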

\begin{proofsketch}
    The overall idea of the proof is to explicitly rewrite
    witnesses into one of the canonical forms.
    More specifically,
    we introduce a system of path-rewriting rules
    which simplify witnesses by removing, reducing or changing some
    loops as in \cref{ex:witness}.
    We show that the rules preserve witnesses and
    any sequence of successive rule applications must eventually terminate
    with a normalized path, to which none of the rules is applicable.
    Such a witness can be decomposed as
    \begin{equation}
        \label{eq:form:sketch}
        \pi=\pi_0L_0^{l_0}\pi_{1}L_{1}^{l_1}\dots\pi_kL_k^{l_k}\pi_{k+1}
    \end{equation}
    where the $L_i$ are (pairwise different) loops and the 
    $\pi_i$ are short, i.e.~polynomially bounded.
 Moreover the rules are designed in such a way that almost all $l_i$ are polynomially bounded. By almost all we mean except one in the first and third form of the witness or two in the witness of the second form. This means that unravelling of those loops with polynomially bounded $l_i$ and glueing them with
surrounding $\pi_i$ to get paths $\pi_0,\pi_1,\pi_2$ does not blow up of the length of $\pi_0,\pi_1, \pi_2$ above polynomial
 bound $c$.
\qed
\end{proofsketch}

Notice that the bound $c$ in the claim of \cref{thm:form}
depends only on the number of states.
We now derive a decision procedure
for \tinc\ that works in logarithmic space.
%


\begin{theorem}
    Let $pm$ and $qn$ be processes of \OCN\ $\NN{A}$
    and \DOCN\ $\NN{B},$ 
    respectively, where $m,n$ are given in binary.
    There is a nondeterministic algorithm that
    decides $T(pm)\subseteq T(qn)$ in
    logarithmic space.
\end{theorem}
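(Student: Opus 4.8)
The plan is to decide the complement, \emph{non}-inclusion $T(pm)\not\subseteq T(qn)$, by a nondeterministic logarithmic-space procedure, and then to invoke closure of $\NL$ under complementation (Immerman--Szelepcsényi) to obtain a nondeterministic logspace decision procedure for inclusion itself. By \cref{lem:reduction} we may first reduce, in logarithmic space, to the hypotheses of \cref{thm:form}: a deterministic $\NN{A}$ and a complete deterministic $\NN{B}$; since logspace reductions compose with logspace procedures, we assume these henceforth. Non-inclusion holds precisely when a witness for $(pm,qn)$ exists, and by \cref{thm:form} it suffices to search for witnesses that are either short (length at most $c=\mathrm{poly}(K)$) or of one of the three canonical forms, with connecting paths $\pi_i$ of length at most $c$ and iteration counts $l_0,l_1$ polynomial in $K$ and in the counter-values $m,n$.

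The procedure guesses the form together with a \emph{compact description} of its constituents, never the constituents themselves. Each loop $L_j$ is described by its start node and the four integers $\effect{A}(L_j),\effect{B}(L_j),\guard{A}(L_j),\guard{B}(L_j)$; each short path $\pi_i$ by its two endpoints and the analogous four integers. All of these are bounded in absolute value by the length of the path in question, hence by $c=\mathrm{poly}(K)$, so the complete description occupies only $O(\log K)$ bits. That a loop or short path with the guessed endpoints and effect/guard data really exists in the product is checked by a reachability query in an augmented product graph whose nodes additionally record the running effect and running minimum on both counters (each confined to $[-c,c]$) together with a length counter bounded by $c$. This graph has polynomially many nodes, so each query is an ordinary $\NL$ reachability computation.

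Crucially, the iteration counts $l_0,l_1$ are \emph{not} guessed. For an assembled path $\pi$ of a canonical form, $\effect{B}(\pi)$ is affine in $l_0,l_1$ while $\guard{A}(\pi)$ and $\guard{B}(\pi)$ are maxima of finitely many affine expressions in $l_0,l_1$, all with coefficients among the bounded loop quantities above. The witness conditions therefore amount to a constant-size system of linear (in)equalities: enabledness $\guard{A}(\pi)\le m$ and $\guard{B}(\pi)\le n$, the exhaustion equation $n+\effect{B}(\pi)=0$ forced by the fact that a witness empties $\NN{B}$, and the existence of a distinguishing action at the target. Because every loop of type $\TypeUD$ or $\TypeDD$ strictly decreases the $\NN{B}$-counter, the exhaustion equation pins down $l_0$ by a single division in forms~1 and~3, and expresses $l_1$ as an affine function of $l_0$ in form~2; substituting reduces the whole system to deciding whether an explicit arithmetic progression (the divisibility/residue condition on the remaining count) meets an interval (the conjunction of the guard inequalities). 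The short-witness case is the degenerate instance with no loops, where exhaustion forces $n\le c$ and the check is bounded.

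The main obstacle, and the reason a step-by-step simulation fails, is that the counters---and therefore the iteration counts and the products $l_j\effect{}(L_j)$---have magnitude exponential in the input and cannot be stored in logarithmic space. The compact description is exactly what sidesteps this: every test above is a \emph{constant} number of additions, multiplications, comparisons and one divisibility/interval check on integers that are either the binary inputs $m,n$, read directly from the input tape, or the $O(\log K)$-bit guessed quantities. Each such operation on binary integers lies in $\mathrm{FL}$, and $\mathrm{FL}$ is closed under composition, so no large intermediate value is ever written into working memory; the procedure runs in space $O(\log K)$ plus the logspace needed for arithmetic on the binary inputs, i.e.\ logarithmic in the input size. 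Hence non-inclusion is in $\NL$, and by closure of $\NL$ under complement so is inclusion, which yields the claimed nondeterministic logspace algorithm. The delicate point throughout is this last space analysis: guaranteeing that feasibility of the canonical forms is testable purely arithmetically, without ever materialising the exponentially large iteration counts or counter-values.
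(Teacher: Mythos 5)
Your proposal is correct and follows the same overall strategy as the paper: normalize via \cref{lem:reduction}, invoke \cref{thm:form} to restrict the search to short witnesses or the three loop templates, guess a polynomially-bounded template in $\O(\log K)$ space, and decide feasibility of the (possibly exponentially large) iteration counts purely arithmetically on the binary inputs $m,n$ rather than by simulation. The differences are in how the feasibility check is discharged. The paper stepwise guesses the short paths $\pi_i$ and loops $L_i$ transition by transition, accumulating effects and guards on the fly, whereas you guess a compact description (endpoints plus effect/guard data) and certify its realizability by an \NL\ reachability query in an augmented product graph; these are interchangeable. More substantively, the paper deliberately avoids any divisibility reasoning: for templates of forms 1 and 2 it checks only the single inequality $m\ge\guard{\SP}(\pi_0L_0)$ and then argues the existence of suitable exponents from the loop types and slopes together with witness monotonicity (\cref{lem:witness-monotonicity}) --- if iterating the $\TypeUD$ loop over-exhausts $\NN{B}$, a prefix is already a witness, since $\NN{B}$ is deterministic and complete --- and for form 3 it eliminates $l_0$ via the exhaustion equation to obtain one weighted inequality $m\cdot A - n\cdot B\ge C$, checked by the appendix routine (\cref{lem:complexity}). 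You instead impose the exact exhaustion equation $n+\effectB(\pi)=0$ in all cases and solve the resulting two-variable integer-linear system via a residue-meets-interval test. Both are sound and complete (a form-compliant witness guaranteed by \cref{thm:form} satisfies your Diophantine system, and any solution reassembles into a witness), and the extra divisibility check (division of a binary number by a $\mathrm{poly}(K)$-magnitude constant) is indeed logspace; your route buys a uniform treatment of all three forms at the cost of this extra arithmetic, while the paper's monotonicity shortcut keeps every check a pure inequality. Your explicit appeal to Immerman--Szelepcs\'enyi to pass from non-inclusion to inclusion is also a point the paper leaves implicit, and making it explicit is a small improvement in rigor.
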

\begin{proof}
    Let 
    $\NN{A}=(Q_\SP,\Act,\delta_\SP)$ and
    $\NN{B}=(Q_\DUP,\Act,\delta_\DUP)$,
    and let $K\in\N$ be the number of states in their product.
    By \cref{lem:reduction}, we can assume w.l.o.g.\ that $\NN{A}$ is 
    deterministic and $\NN{B}$ is complete and deterministic
    and 
    so \cref{thm:form} applies.

If the initial counter-values are $m=n=0$, \cref{thm:form} implies a polynomial bound on the length of shortest witnesses.
In that case, one can simply stepwise guess and verify a witness,
explicitly storing the
intermediate processes with binary encoded counter-values
in logarithmic space.
Such a procedure is impossible with arbitrary initial counter-values
as part of the input, because one
does not even have the space to memorize them.

    For the general case, we argue that one can nondeterministically guess
    a template (consisting of short paths) and verify in logspace that
    there is indeed some witness that fits this template.
    \Cref{thm:form} allows us to either guess a short ($\le c\in poly(K)$) witness
    or one of forms 1,2 or 3, together with matching short paths $\pi_i,L_i$.
    The effect and guard of these paths
    are bounded by their lengths and hence by $c$.
    This means $\O(\log K)$ space suffices to stepwise compute
    the binary representation of these values
    and verify that the conditions the form
    imposes on the types and slopes of the loops are met.
%
    It remains to check if 
    exponents $l_0,l_1\in\N$ exist, that complete the description of a
    witness $\pi$.
    To see why these checks can be implemented in logarithmic space,
    first recall
    that one can verify inequalities of the form
\begin{equation}
  \label{eq:app_inequality}
  m\cdot {A} +{B} \ge n \cdot {C} +{D}
\end{equation}
in $\O(log(A+B+C+D))$ space, if $m,n\in \N$ are given in binary
(see \cref{app:inequalities}).

    For templates of the first two forms, it suffices to 
%
    check if $m\ge\guard{\SP}(\pi_0L_0)$, because the type of $L_0$ implies that
    $\guard{\SP}(\pi_0L_0^l)\le \guard{\SP}(\pi_0L_0)$ for all $1<l\in\N$.
    This means that the process $pm$ of $\NN{A}$ can go to, and repeat the loop $L_0$
    arbitrarily often. In case its effect in $\NN{B}$ is negative
    (in templates of form 1), this immediately
    implies the existence of a suitable $l_0$. For templates of form 2)
    the existence of $l_0,l_1\in\N$ completing the description of a witness
    is guaranteed because the slope of the first loop is bigger than that of the second.

    For templates of the third kind
    recall that, because $\NN{B}$ is complete, a path $\pi=\pi_0L_0^{l_0}\pi_1$
    is a witness iff there is some edge $T$
    in the product such that $\effect{\DUP}(T)=-1$ and
    both $m\ge\guard{\SP}(\pi T)$ and $n+\effect{\DUP}(\pi T) = -1$.
    Equivalently, we can write this as
%
    \begin{align}
        m+\effect{\SP}(\pi_0L_0^{l_0})&=m+\effect{\SP}(\pi_0)+\effect{\SP}(L_0)\cdot l_0 \ge
        \guard{\SP}(\pi_1 T)\text{ and }\label{eq:main:m}\\
        n+1 &= - \effect{\DUP}(\pi T) = -\effect{\DUP}(\pi_0) - \effect{\DUP}(L_0) \cdot l_0 - \effect{\DUP}(\pi_1 T)
        \label{eq:main:n}.
    \end{align}
    Eliminating $l_0$, we see that this is true iff
    \begin{equation}
        m+\effect{\SP}(\pi_0) + \effect{\SP}(L_0)\cdot
            \frac{
            \effect{\DUP}(\pi_0)+\effect{\DUP}(\pi_1)+n
        }{
            - \effect{\DUP}(L_0)
        }
       \ge \guard{\SP}(\pi_1).
    \end{equation}
    Simplifying further we can bring this into the form $m\cdot A - n\cdot B \ge C$
    where $A,B,C$ are polynomial in $c$.
    The condition can be checked in $\O(\log K)$ space.
    \qed
\end{proof}

%
%

    \section{Universality of Nondeterministic One-Counter Nets}
    \label{sec:universality}
    To contrast the result of the previous section we now turn
to the problem of checking \tinc\ between a finite process
and a nondeterministic \OCN.
This problem is known to be decidable, even for
general Petri nets \cite{JEM1999}
and it 
can be easily seen
to be (logspace) inter-reducible with
the trace universality problem, because
\OCNs\ are closed under products with finite systems.

For \OCN,
trace universality
can be decided using a
simple well-quasi-order based saturation method
that determinizes the net on the fly.
We will see that this procedure is optimal:
The problem is Ackermannian, i.e.~it
is non-primitive recursive and lies exactly
at level $\omega$ of the Fast Growing Hierarchy \cite{FFSS2011}.

%
\newcommand{\Nb}{\N_{\bottom}}
Let $\Nb$ be the set of non-negative integers plus
a special least element $\bottom$ and let $\max$
be the total function that returns the maximal element
of any nonempty finite subset and $\bottom$ otherwise.
Consider a set $S\subseteq Q\x\N$ of processes of an \OCN\ $\NN{N}=(Q,\Act,\delta)$.
We lift the definition of traces to sets of processes in the natural way:
the \emph{traces} of $S$ are $T(S)=\bigcup_{qn\in S} T(qn)$.
By the monotonicity of \tinc\ (\cref{lem:monotonicity}),
the traces of a finite set of processes are determined
only by the traces of its maximal elements.

\begin{definition}
    Let $Q=\{q_1,q_2,\dots,q_k\}$ be the states-set of some \OCN.
    For a finite set $S\subseteq Q\x\N$
    define the \emph{macrostate} as the vector
    $M_S 
    \in \Nb^k$ 
    where for each $0< i\le k$,
    $M_S(i)= M_S(q_i) = \max\{n\;|\; q_in \in S\}$. 
    In particular, the macrostate for a singleton set $S=\{q_in\}$
    is the vector with value $n$ at the $i$-th coordinate
    and $\bot$ on all others.
    The \emph{norm} of a macrostate $M \in \Nb^k$
    is $\norminf{M} = \max\{M(i) \;|\; 0<i\leq k\}.$
    We define a step relation $\WStep{a}{}{}$
    for all $a\in\Act$ on the set of macrostates as follows:
    \begin{equation}
    (n_1,n_2,\dots,n_k)\WStep{a}{}{} (m_1,m_2,\dots,m_k)
    \end{equation}
    iff
    $m_i = \max\{n\:|\: \exists n_j\neq\bottom.\: q_jn_j\step{a}q_in \}$
    for all $0< i\le k$.
%
    The \emph{traces} of macrostate $M$ are
    $T(M) = \bigcup_{0< i\le k} T(q_{i\, } M(i))$,
    where $T(q\bot)=\emptyset$.
    For two macrostates $M,N$ we say $M$ is \emph{covered} by $N$ and write
    $M\sqsubseteq N$,
    if it is pointwise smaller, i.e., $M(i)\le N(i)$ for all
    $0< i\le k$.
    For convenience, 
    we will write $\{q_1=n_1, q_2=n_2,\dots,q_l=n_l\}$
    for the macrostate with value $M(i)=n_i$
    whenever $q_i=n_i$ is listed
    and $\bot$ otherwise.
\end{definition}

Steps on macrostates correspond
to the classical powerset construction and each macrostate
represents the finite 
set of possible processes the \OCN\ can be in,
where all non-maximal ones (w.r.t.~their counter-value) are pruned out.

\begin{example}{Macrostate}

\begin{minipage}[b]{0.35\textwidth}

 \begin{tikzpicture}[node distance=2cm]
            \node (P) at (0,0) {$q2$};
            \node (Q) at (0, 3) {$q1$};
            \node (R) at (2,1.5) {$q3$};   
            \node (nic) at (0,-2) {};
              
            \path[->]
                  (P) edge node[left] {$a, 1$} (Q)
                  (Q) edge node[right] {$a, 0$} (R)
                  (R) edge node[right] {$a, -1$} (P);
     \path[->]
		(R) edge [loop right] node[right] {$a,1$} (R);           
       \end{tikzpicture}

\end{minipage}
\begin{minipage}[b]{0.6\textwidth}
Consider automaton $\NN{A}$ like on the picture, state $q_3$ and a counter value $4$; we analyse traces, $T(q_3 4)$. If we go via an edges labelled by $a$ once we can see that $T(q_3 4)=\{\varepsilon\}\, \cup\, aT(q_2 3)\, \cup\, aT(q_35)$. This implies that $T(q_3 4)$ is universal iff $T(q_3, 5)\cup T(q_2, 3)$ is universal, i.e. contains $\Act^*$. Making similar analysis after using two more $a$ we get that $T(q_3 4)$ is universal iff $T(q_37)\, \cup \, T(q_2 5)\, \cup \, T(q_1 5)\, \cup \, T(q_3 4)$ is universal. But we know that $T(q_3 4)\subseteq T(q_3 7)$ which implies that
$T(q_37)\, \cup \, T(q_2 5)\, \cup \, T(q_1 5)\, \cup \, T(q_3 4)=T(q_37)\, \cup \, T(q_2 5)\, \cup \, T(q_1 5)$. This immediately lead to introduce macrostates $M_{\{q_3 7,q_2 5,q_1 5,q_ 34\} }= (5,5,7).$ 
The norm $\norminf{M_{\{q_3 7,q_2 5,q_1 5,q_ 34\} }}= 7.$
On the other hand $M_{\{q_3 4\} }=(\bottom, \bottom, 4)$ which means that 
states $q_1$ and $q_2$ are not present and in this case $M(1)=M(q_1)=\bottom$. Moreover we can write that
$M_{\{q_ 34\} }\sqsubseteq M_{\{q_3 7,q_2 5,q_1 5,q_ 34\} }.$

\end{minipage}

\end{example}

The next lemma directly follows
from these definitions and monotonicity (\cref{lem:monotonicity}).
\begin{lemma}\label{lem:macroprops}\
\begin{enumerate}
  \item The covering-order $\sqsubseteq$ is a well-quasi-order on $\Nb^k$, the set
      of all macrostates. Moreover, $M\sqsubseteq N$ implies $T(M)\subseteq T(N)$.
      \label{lem:macroprops:wqo}
  \item If $M\wstep{a}N$ then $\norminf{N} \le \norminf{M}+1$.
      \label{lem:macroprops:norminc}
  \item For any finite set $S\subseteq Q\x\N$ it holds that $T(S) = T(M_S)$.
      \label{lem:macroprops:traces}
\end{enumerate}
\end{lemma}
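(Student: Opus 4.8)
The plan is to prove the three parts independently, each reducing to a short argument. For the first part, I would observe that $\Nb$, ordered with $\bottom$ below every natural number, is a well-order and in particular a well-quasi-order, and then invoke Dickson's lemma: the pointwise order on a finite product of well-quasi-orders is again a well-quasi-order. Since $\sqsubseteq$ is exactly the pointwise order on $\Nb^k$, this settles the well-quasi-order claim. For the implication that $M\sqsubseteq N$ yields $T(M)\subseteq T(N)$, I would argue coordinatewise: for each $i$, either $M(i)=\bottom$, in which case $T(q_i\,M(i))=\emptyset$ and the inclusion is trivial, or $M(i)\in\N$ with $M(i)\le N(i)$, so iterating \cref{lem:monotonicity} gives $T(q_i\,M(i))\subseteq T(q_i\,N(i))$. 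Taking the union over $i$ yields $T(M)\subseteq T(N)$.

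For the second part, I would unfold the definition of the macrostate step. If $M\wstep{a}N$, then each coordinate $N(i)$ is the maximum over all $n$ reachable via an $a$-labelled step $q_jn_j\step{a}q_in$ from some $n_j\neq\bottom$. Any such step satisfies $n=n_j+d$ with $d\in\{-1,0,1\}$, hence $n\le n_j+1\le\norminf{M}+1$. As this bound holds for every contributing term, $N(i)\le\norminf{M}+1$ for all $i$, and therefore $\norminf{N}\le\norminf{M}+1$.

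For the third part, I would prove the two inclusions separately, the crucial point being that $S$ is finite, so each maximum defining $M_S$ is actually attained. For $T(S)\subseteq T(M_S)$: every $q_in\in S$ satisfies $n\le M_S(i)$ by definition of the maximum, so \cref{lem:monotonicity} gives $T(q_in)\subseteq T(q_i\,M_S(i))\subseteq T(M_S)$, and the union over $S$ yields the inclusion. For the reverse, each coordinate with $M_S(i)\neq\bottom$ has its value witnessed by some element $q_i\,M_S(i)\in S$, whence $T(q_i\,M_S(i))\subseteq T(S)$, while coordinates with $M_S(i)=\bottom$ contribute only the empty set; the union over $i$ gives $T(M_S)\subseteq T(S)$.

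None of the three parts presents a genuine obstacle, as anticipated by the remark that the lemma follows directly from the definitions and monotonicity. The only points requiring care are the correct treatment of the least element $\bottom$ in the well-quasi-order argument, and the reliance on finiteness of $S$ in the third part, which guarantees that the defining maxima are realized by concrete elements of $S$ rather than being mere suprema.
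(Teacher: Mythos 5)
Your proof is correct and follows exactly the route the paper intends: the paper gives no explicit proof, stating only that the lemma ``directly follows from these definitions and monotonicity (\cref{lem:monotonicity})'', and your argument is the standard expansion of that remark --- Dickson's lemma for the well-quasi-order claim, coordinatewise iteration of \cref{lem:monotonicity} for the trace inclusions, and unfolding the macrostate step for the norm bound. Your attention to the $\bottom$ cases and to finiteness of $S$ (so the maxima are attained) fills in precisely the details the paper leaves implicit.
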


Dealing with macrostates allows us to
treat universality as a reachability problem:
By point~\ref{lem:macroprops:traces}
of \cref{lem:macroprops} we see that a
process $qn$ is \emph{not} trace universal,
$\Act^* \neq T(qn)$, if and only if $M_{\{qn\}} \WStep{}{*}{} (\bottom,\bot,\dots,\bot)$.
We take the perspective of a pathfinder, whose goal it is to
reach $(\bot)^k$. 

We can decide universality by
stepwise guessing a shortest terminating path from the initial macrostate,
and thus a witness for non-universality.
Whenever we see a macrostate that covers one of its predecessors,
we can safely discard this candidate,
because
omitting the intermediate path would result in a shorter witness
by \Cref{lem:macroprops}.\ref{lem:macroprops:wqo}.

\newcommand{\inc}{\mathrm{inc}}
\newcommand{\dec}{\mathrm{dec}}
\newcommand{\ifz}{\mathrm{ifz}}
We show non-primitive recursiveness by reduction from the
control-state reachability problem for incrementing \textCMs\
\cite{DL2009,FFSS2011}.
\begin{definition}[\TextCMs]
    A (Minsky)-\textCM\ (\CM) 
    is an automaton with finitely many states $Q$,
    finitely many \emph{counters} $C_1,C_2,\dots,C_k$,
    and transitions are of the form $Q \x \Act \x Q$ where $\Act$ is
    $\{\inc, \dec, \ifz\} \x\{ 1,2,\dots,k\}$.
    A \emph{configuration} of such a \CM\
    consists of
    a state and a valuation of the counters.
    Performing a transition $(p, (op,i), q)$ changes a configuration precisely: the state changes from $p$ to $q$ and we make operation $op$ on the counter $c_i$,
    where $\inc, \dec$ and $\ifz$ mean increment, decrement and zero-test,
    respectively.
    Such a step is forbidden if the requested operation is $\dec$
    and the value of $c_i$ is $0$, or if $c_i>0$ and the operation is $\ifz$.

    An \emph{\textIncCM} (\incCM) is a \CM\
    in which counters can spontaneously
    increment without performing any transitions.
    Such increments we call incrementing errors.
    \emph{Control-state reachability} is the decision problem that
    asks if there is a run of a given
    \CM\ from an initial configuration to some given state $q_f\in Q$.
\end{definition}


%
%
%
Our reduction is based on the following simple observation.
Consider an \OCN\ $\NN{N}=(Q,\Act,\delta)$ that contains a \emph{universal} state
$U$: it has self-loops $U\step{a,0}U \in\delta$ for every action $a\in\Act$.
A Pathfinder who wants to prove non-universality
must avoid macrostates with $M(U)\neq \bot$, because
no continuation of a path leading to such a macrostate can be a witness.
We can use this idea to construct macrostates that
prevent Pathfinder from making certain actions.

\begin{definition}[Obstacles]
  Let $S\subseteq\Act$ be a set of actions in an \OCN\
  that contains a universal state $U$.
  A state $q\in Q$ is called an \emph{S-obstacle} if $q\step{a,0}U\in \delta$ for all
  actions $a\in S$.
  We say $q$ \emph{ignores} $S$, if $q\step{a,0}q\in \delta$ for all $a\in S$.
\end{definition}
Note that if a macrostate contains an $S$-obstacle, then Pathfinder
must avoid all actions of $S$. 
In order to remove an obstacle, Pathfinder must play an action that is not
the label of any of its incoming transitions.
\newcommand{\INIT}{\mathit{Init}}
\newcommand{\init}{\mathit{init}}
\newcommand{\Zero}{\mathit{Z}}
\newcommand{\LIM}{\mathit{Lim}}

\begin{theorem}
    \label{thm:univ-ack}
    \Tuniv\ for \OCN\ is not primitive recursive. 
\end{theorem}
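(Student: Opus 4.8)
The plan is to prove non-primitive-recursiveness by a reduction from the control-state reachability problem for \textIncCMs, which is itself non-primitive recursive \cite{DL2009,FFSS2011}. Given an \incCM\ $\mathcal{M}$ with counters $C_1,\dots,C_r$ and a distinguished target state $q_f$, I would build an \OCN\ $\NN{N}$ and an initial process so that $q_f$ is reachable in $\mathcal{M}$ if and only if that process is \emph{not} trace universal. The whole reduction is phrased in the macrostate/pathfinder picture developed above: by \cref{lem:macroprops} a process is non-universal exactly when Pathfinder can drive the initial macrostate to $(\bottom)^k$, so it suffices to arrange that terminating pathfinder plays correspond precisely to accepting runs of $\mathcal{M}$.

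For the encoding I would exploit that the determinised view already provides one macrostate coordinate per control-state of $\NN{N}$, and use dedicated states $c_1,\dots,c_r$ as independent counter registers whose macrostate value equals the counter of the (single) process residing there. An increment of $C_i$ is realised by a self-loop $c_i\step{a,1}c_i$ and a decrement by $c_i\step{a,-1}c_i$, where the latter is automatically blocked when the register is $0$, matching the prohibition of decrementing a zero counter; on such an action every other register carries a neutral self-loop $c_j\step{a,0}c_j$ so its value is preserved, and the current \incCM-control-state is tracked by a separate live token-process. The essential design constraint is that no state may carry neutral self-loops for \emph{all} labels, since such a state is universal and its mere presence in a macrostate would forever forbid reaching $(\bottom)^k$; it is exactly this constraint that forces the use of the universal state $U$ together with obstacles to both drive and police the simulation.

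Concretely, I would surround each instruction gadget by obstacles so that the only label sequences Pathfinder can play without ever spawning a process in $U$ are those that faithfully execute the corresponding \incCM-step; any attempt to fire a transition out of turn, to skip a register-preserving self-loop, or to bypass a test routes some process into $U$, after which the macrostate is universal and $(\bottom)^k$ is unreachable, which gives soundness. A zero-test on $C_i$ is implemented by a label with transitions $c_i\step{z,-1}U$ and $c_i\step{z,0}c_i$: if the register is positive the first transition populates $U$ and blocks Pathfinder, while if it is $0$ (or already $\bottom$) only the neutral branch fires and Pathfinder proceeds. Finally $q_f$ is equipped with a last action that deletes all remaining processes, so that reaching $q_f$ is exactly what lets Pathfinder complete a play at $(\bottom)^k$.

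The point that makes the reduction go through — and the step I expect to be the main obstacle — is to match the \emph{errors} exactly. Because \OCN-steps are monotone (\cref{lem:monotonicity}), whenever a macrostate reaches $(\bottom)^k$ so does every pointwise-smaller one; Pathfinder may therefore at any moment pass to a smaller register valuation — in the limit emptying a register to $\bottom$ — without losing a winning continuation. Under the chosen (complementary budget) encoding this unavoidable slack is precisely the spontaneous-increment capability of $\mathcal{M}$, which is why the faithfully simulated device is an \emph{incrementing} counter machine rather than a perfect Minsky machine, in keeping with the decidability of the problem. The delicate part of the argument is thus two-sided: I must show that Pathfinder's only extra freedom over a perfect run is exactly this incrementing error, so that every terminating play yields a legal error-prone run of $\mathcal{M}$ reaching $q_f$, and conversely that every accepting \incCM-run, errors included, is realisable by Pathfinder without ever touching $U$. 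Pinning down this tight correspondence — in particular ruling out that the weak counter accidentally grants more, or less, power than the incrementing-error semantics — is the crux, after which non-primitive-recursiveness transfers immediately from \cite{DL2009,FFSS2011}.
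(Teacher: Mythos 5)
Your reduction follows the paper's proof almost verbatim --- same source problem (\incCM\ control-state reachability), same macrostate/Pathfinder view, same universal state $U$ policed by obstacles, dedicated register states carrying counter values on self-loops, the identical zero-test gadget $c_i\step{z,-1}U$, and an end-marker that empties the macrostate only at $q_f$ --- so the architecture is right. But there is a genuine gap, precisely at the step you yourself flag as the crux: the decrement-at-zero case. Your claim that the blocked transition $c_i\step{a,-1}c_i$ ``matches the prohibition of decrementing a zero counter'' is wrong on both sides. First, nothing prohibits Pathfinder from \emph{playing} that label when $M(c_i)=0$: in the macrostate semantics the coordinate simply drops to $\bottom$ while the rest of the macrostate survives on its neutral self-loops. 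Once $M(c_i)=\bottom$ the register is dead: your increment self-loop has no process to act on, so all later increments of $c_i$ are silently ignored, and in your zero-test gadget neither $c_i\step{z,-1}U$ nor $c_i\step{z,0}c_i$ fires, so every later zero-test on $c_i$ passes vacuously. Pathfinder can then reach $(\bottom)^k$ along words corresponding to no legal error-prone run (decrement at $0$, then increment, then zero-test: in an \incCM\ the counter is at that point certainly positive, since errors only \emph{increase} it, so the test must fail), and soundness breaks. Second, the \incCM\ does \emph{not} prohibit decrementing a zero counter --- a spontaneous increment may precede it --- so such runs must remain realisable, which your blocking also prevents.

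The missing idea is the paper's constant-zero state $Z$: it ignores every action except the end marker $\$$ and has a transition $Z\step{t,0}C_i$ for every decrementing transition $t$, so that a decrement played at $M(C_i)=0$ keeps the coordinate at $0$ (via the max with the $Z$-branch) instead of losing it --- implementing exactly the semantics ``error-increment, then decrement''. With $Z$ in place no register can fall to $\bottom$ before the end marker, and both directions of the correspondence go through. Relatedly, your account of where the incrementing errors come from is off: macrostates evolve \emph{deterministically} from the played word, so Pathfinder cannot ``at any moment pass to a smaller register valuation''; monotonicity (\cref{lem:monotonicity}) justifies pruning in the decision procedure, not voluntary slack in the reduction. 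In the paper the errors are instead made explicit: dedicated actions $\tau_i$ label increasing self-loops on $C_i$, so every word avoiding $U$ spells out a unique error-annotated run of the \incCM. Your ``complementary budget'' remark also hints at a counter encoding you never define; with the direct-value encoding you do use, the error analysis has to run through $Z$ and the $\tau_i$ as above.
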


\begin{proof}
By reduction from the control-state reachability problem for \incCM,
which has non-primitive recursive complexity \cite{DL2009}.
    We construct an \OCN-process $\INIT(0)$ that is not universal
    iff a given \incCM\ reaches a final state from its initial configuration.
    The idea is to enforce a faithful simulation of the \incCM\ by pathfinder,
    who wants to show non-universality of the \OCN\
    by stepwise rewriting the initial macrostate
    $\{\INIT=0\}$ to the all-bottom-macrostate $\bot^l$.
    
    We construct an \OCN\ $\NN{N}$ which has a unique action
    for every transition of the \incCM, as well as actions
    $\tau_i$ that indicate incrementing errors for every counter $c_i$,
    and actions $\sharp$ and $\$$ to mark the beginning and end of a run, respectively.
    This way we make sure there is a strict correspondence between words and
    \incCM-runs. The states of \NN{N} are
    \begin{itemize}
      \item a new initial state $\INIT$ and a universal state $U$,
      \item a state $q_i$ for every state $q_i$ of the \incCM,
      \item a state $C_i$ for every counter $c_i$ of the \incCM,
      \item a state $\Zero$, which ignores every action but the end marker $\$$. State $\Zero$ will be used to access the constant $0.$
    \end{itemize}

    A configuration $q(c_1,c_2,\dots,c_k)$ of the \incCM\ is represented by a macrostate
    $\{q=0, \Zero=0, C_1=c_1,C_2=c_2,\dots,C_k=c_k\}$.
    We will define the transitions of $\NN{N}$ such that
    the only way for Pathfinder
    to reach $\bot^l$
    is by
    rewriting the initial macrostate $\{\INIT=0\}$
    to the one representing the initial \incCM\ configuration
    and then to stepwise
    announce the transitions of an accepting run of the \incCM.
    %
Using the idea of obstacles, we define the rules of the net $\NN{N}$
so that the only way Pathfinder can avoid the universal state $U$
and reach the macrostate $\bot^k$ is by
first transforming the initial macrostate $\{\INIT=0\}$
to the one that represents the initial \incCM\ configuration
and then announcing
transitions (as well as actions demanding increment errors)
of a valid and accepting run of the \incCM.
%
    \paragraph{Initialization.}
    %
    %
    To set up $M_0=\{q_0=0,\Zero=0,C_0=0,C_1=0,\dots,C_k=0\}$, representing
    the initial \incCM\ configuration, we add 
    $\sharp$-labelled
    transitions
    with effect $0$ from $\INIT$ to $q_0,\Zero$ and $C_i$
    for all $0\le i\le k$.
    Moreover, we make $\INIT$ an obstacle for every action but $\sharp$.
    This way, Pathfinder has to play $\sharp$ as the first move
    (and set up $M_0$) in order to avoid a universal macrostate. 
    Furthermore we make $\#$ an obstacle for every state except of $\INIT$; this prevent playing $\#$ after the first move.

    \paragraph{Finite control.}
    For any transition $t=q\step{(a,i)}q'$ of the \incCM, we add a transition
    $q\step{t,0}q'$ to $\NN{N}$ that, in a macrostate-step, will replace
    the value $0$ in dimension $q$ by $\bot$ and introduce value $0$ in
    dimension $q'$.
    Moreover, we make every state $q$ an obstacle for all actions announcing
    \incCM-transitions not originating in $q$.
    This prevents Pathfinder from announcing transitions from $q$ unless
    the current macrostate has $M(q)=0$ and $M(q_i)=\bot$ for all $q_i\neq q$.


    \paragraph{Simulation of the Counters.}
    Every transition operates on one of the counters $c_i$ for $0\le i\le k$.
    Below we list the corresponding transitions 
    in the \OCN\ $\NN{N}$ for this counter.
    Every state of $\NN{N}$ not explicitly mentioned
    ignores the action in question. 
    In the macrostate, the values of these states are therefore unchanged.

\begin{description}
  \item[increments] For \incCM-transitions $t$ that increase the $i$th counter,
      $\NN{N}$ contains a $t$-labelled transition from state $C_i$ to $C_i$ with effect $+1$.
      Additionally, to deal with spontaneous increment errors,
      there is a $\tau_i$-labelled increasing self-loop in state $C_i$.
  \item[decrements] For \incCM-transitions $t$ that decrease the $i$th
      counter, $\NN{N}$ contains a $t$-labelled transition from state $C_i$ to $C_i$
      with effect $-1$.
      
      This means that the next macrostate $M$ could lose
      the value for this counter and have $M(C_i)=\bot$ if previously,
      the value was $0$.
      In that case, the decrementing step from value $0$ to value $0$ is
      valid in the \incCM\ because it can first (silently) increment and then
      do the (visible) decrement step.
      In order to avoid losing the state $C_i$ in the macrostate, the \OCN\
      contains a transition $\Zero\step{t,0}C_i$ from the
      constant-zero state $\Zero$ to state $C_i$.
      Recall that $\Zero$ is present in the macrostate because
      $\Zero$ ignores every action except the end marker $\$$.
      Consequently, no correctly set up macrostate will set $M(C_i)=\bot$.
  
  \item[zero-tests] For \incCM-transitions $t$ that test the $i$th counter
      for $0$, we add a $t$-labelled transition $C_i\step{t,-1}U$
      from state $C_i$ to the universal state.
      This prevents Pathfinder from using these actions if the current
      macrostate has $M(C_i)>0$ because it would make the next macrostate
      universal. If however $M(C_i)=0$, such a step is safe because
      the punishing transition is not enabled in the \OCN-process $C_i0$.
\end{description}
    
Lastly, we only add transitions to $\NN{N}$ so that the final state $q_f$
is the only original \incCM-state which is not an obstacle for $\$$.
This prevents Pathfinder from playing the end-marker $\$$ unless
the simulation has reached the final state.
\qed
\end{proof}

\begin{example}{Reduction.}

 \begin{minipage}[b]{0.35\textwidth}

  \begin{tikzpicture}[node distance=2cm]
           
                 \node (P) at (0,0) {$q0$};
            \node (Q) at (3, 1) {$q1$};
            \node (R) at (0,2) {$q2$};

            \node (nic) at (0,-0.5) {};
            \node (nic1) at (0,3.5) {};
               
            \path[->]
                  (P) edge node[below, yshift=-0.15cm] {$inc\ 1$} (Q)
                  (Q) edge node[above, yshift=0.15cm] {$dec\ 2$} (R)
                  (R) edge node[right] {$ifz\ 2$} (P);
           
       \end{tikzpicture}

\end{minipage}
\begin{minipage}[b]{0.6\textwidth}
Consider an incrementing error two counter machine (as on the left) and we ask about reachability from $q_0$ to $q_2.$

The one counter net which is result of the construction for the above reachability problem. We will use several universal states $U$ to avoid crossing arrows, moreover edges labelled with a sequence of labels mean a bunch of edges one for each label. We put labels into brackets, to clearly separate each label.
       
\end{minipage}

       \begin{tikzpicture}[node distance=3cm]
       \node (in) at (0,8) {$Init$};
       \node (Z) at (0,3) {$Z$};
       \node (C1) at (5,5) {$c_1$};
       \node (C2) at (5,3) {$c_2$};
       \node (U1) at (5,1) {$U$};
       \node (U2) at (0,10) {$U$};
       \node (U3) at (9,7) {$U$}; 
       
            \node (P) at (5,8) {$q0$};
            \node (Q) at (9,9 ) {$q1$};
            \node (R) at (5, 10) {$q2$};  
            \node (nic) at (5, 11.5) {};

            \path[->]
                  (P) edge node[below, yshift=-0.15cm] {$(a, 0)$} (Q)
                  (Q) edge node[above, yshift=0.15cm] {$(b, 0)$} (R)
                  (R) edge node[right] {$(c, 0)$} (P);
       
 \path[->]
                  (in) edge node[right, yshift=0.15cm] {$(\#, 0)$} (C1)
                  (in) edge node[right, yshift=0.15cm] {$(\#, 0)$} (C2)
                  (in) edge node[right] {$(\#, 0)$} (Z)		
		(in) edge node[below] {$(\#, 0)$} (P)
		(Z) edge node[below] {$(b, 0)$} (C2)
		(P) edge node[above] {$(\$, 0),\  (c,0),\ (b,0),\ (\#,0)$} (U2)
		(R) edge node[above] {$(\$, 0),\  (a,0),\ (b,0),\ (\#,0)$} (U2)
		(Q) edge node[right] {$(a,0),\ (c,0),\ (\#,0)$} (U3)
		(in) edge node[left] {$(\Act\setminus\{\#\}, 0)$} (U2)
		(C2) edge node[right] {$(c, -1),\ (\#,0)$} (U1)
		(C1) edge node[right, yshift=-0.3cm] {$(\#,0)$} (U3)
		(Z) edge node[below, yshift=-0.2cm] {$(\#,0)$} (U1);
		
		 \path[->]
		(C1) edge [loop right] node[right] {$(a,1),\ (\tau_1, 1),\ (\tau_2,0),\ (b,0),\ (c,0)$} (C1);      
		
		 \path[->]
		(C2) edge [loop right] node[right] {$(a,0),\ (\tau_1, 0),\ (\tau_2,1),\ (b,-1)$} (C1);    
		
		 \path[->]
		(Z) edge [loop left] node[below, yshift=-0.3cm] {$(\Act \setminus \{\$\}, 0)$} (Z);      
		  
		   \path[->]
		(P) edge [loop below] node[below] {$(\tau_1,0),\ (\tau_2, 0)$} (P);    
		   \path[->]
		(R) edge [loop above] node[above, xshift=1cm] {$(\tau_1,0),\ (\tau_2, 0)$} (R);    
		   \path[->]
		(Q) edge [loop right] node[right] {$(\tau_1,0),\ (\tau_2, 0)$} (Q);

       \end{tikzpicture}

\end{example}

For the rest of this section, we recall 
a recent result from \textcite{FFSS2011},
that allows us to
provide the exact complexity of the \OCN\ trace universality problem
in terms of its level in the Fast-Growing Hierarchy.
\begin{definition}[Fast-Growing Hierarchy] \label{def:fastgrowing-hierarchy}
    Consider the family of functions $F_n:\N\to\N$
    where for $x,k\in\N$,
    \begin{align*}
        F_0(x) = x+1 \text{\quad and }
        &&
        F_{k+1}(x) = F_k^{x+1}(x).
    \end{align*}
    Here, $F^k$ denotes the $k$-fold application of $F$.
    Moreover, define $F_{\omega}(x)=F_x(x)$ for the first limit ordinal $\omega$.
    For $k\le \omega$,
    $\mathfrak{F}_k$ denotes
    the least class of functions
    that contains all constants and is closed under
    substitution, sum, projections, limited recursion and applications
    of functions $F_n$ for $n\le k$.

    Already $\mathfrak{F}_2$ contains all elementary functions
    and the union $\bigcup_{k\in\N}\mathfrak{F}_k$ of all finite levels
    contains exactly the primitive-recursive functions.
    A function is called \emph{Ackermannian} if it is in
    $\mathfrak{F}_{\omega}\setminus\bigcup_{k\in\N}\mathfrak{F}_k$.
\end{definition}

A sequence $x_0, x_1,\dots,x_l$ of macrostates
is called \emph{good} if there are indices $0\le i<j\le l$
such that $x_i\sqsubseteq x_j$ and \emph{bad} otherwise.
The sequence is \emph{$t$-controlled} by $f:\N\to\N$
if $\norminf{x_i} < f(i+t)$ for every index $0\le i\le l$.
\begin{theorem}[\cite{FFSS2011}]\label{thm:badseq}
    Let $f:\N\to\N$ be a monotone function in $\mathfrak{F}_\gamma$
    such that $f(x)\ge\max\{1,x\}$ for some $\gamma\ge 1$.
    There is a function $L_{k,f}(t)$ in $\mathfrak{F}_{k+\gamma-1}$
    that computes a bound on the maximal length of
    bad sequences in $\Nb^k$ that are $t$-controlled by $f$.
\end{theorem}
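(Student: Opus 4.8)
The plan is to prove the bound by induction on the dimension $k$, exhibiting a concrete candidate for $L_{k,f}$ whose growth rate climbs exactly one level of the Fast-Growing Hierarchy (\cref{def:fastgrowing-hierarchy}) per dimension. Throughout I treat the extra least element $\bot$ as a harmless addition: $\Nb$ is the linear order $\bot < 0 < 1 < \cdots$, so restricted to $\Nb$ the covering order $\sqsubseteq$ is total. For the base case $k=1$, a bad sequence in $\Nb$ is strictly descending; since the sequence is $t$-controlled, its first element satisfies $\norminf{x_0} < f(t)$, so a strictly descending chain in $\Nb$ starting below $f(t)$ has length at most $f(t)+1$. Hence $L_{1,f}(t)=f(t)+1$ works, and it lies in $\mathfrak{F}_\gamma=\mathfrak{F}_{1+\gamma-1}$ because this class contains $f$ and is closed under adding constants.

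For the inductive step I would take a $t$-controlled bad sequence $x_0,x_1,\dots,x_l$ in $\Nb^{k}$, track its first coordinate, and cut the sequence at every position where this coordinate strictly decreases. This splits it into consecutive \emph{blocks} on which the first coordinate is non-decreasing. Inside a single block, if an earlier element were covered by a later one on the remaining $k-1$ coordinates, then—because the first coordinate is non-decreasing within the block—it would be covered on all $k$ coordinates, contradicting badness. Thus projecting away the first coordinate turns each block into a bad sequence in $\Nb^{k-1}$. The number of blocks is at most $f(t)+2$, since the first coordinate starts below $f(t)$ and strictly drops between blocks. A block beginning at global position $p$ consists of the elements $x_{p},x_{p+1},\dots$, so by the control hypothesis its projection is $(t+p)$-controlled by $f$, and the inductive hypothesis bounds its length by $L_{k-1,f}(t+p)$. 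Summing over blocks yields the recurrence
\[
  L_{k,f}(t)\;\le\;\sum_{\text{blocks}} L_{k-1,f}\bigl(t+p_{\text{block}}\bigr),
\]
where each start $p_{\text{block}}$ is bounded by the total length of the preceding blocks.

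The technical heart—and the step I expect to be the main obstacle—is to turn this summation into a closed-form function of the correct hierarchy level. The difficulty is the accumulation of the control shift: the $j$-th block begins at a position $p_j$ that is itself only bounded in terms of $L_{k-1,f}$ evaluated on the earlier shifts, so the recurrence is really an $f(t)$-fold iteration of the one-dimension-lower length function on a growing argument, roughly $L_{k,f}(t)\le L_{k-1,f}^{\,f(t)+2}(t)$ once each partial sum is bounded by an iterate. This iteration pattern is exactly the one built into the hierarchy by $F_{k+1}(x)=F_k^{x+1}(x)$: iterating a function from level $k+\gamma-2$ a number of times controlled by $f(t)\in\mathfrak{F}_\gamma$ yields a function at level $k+\gamma-1$. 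I would therefore \emph{define} $L_{k,f}$ by this limited recursion and verify membership in $\mathfrak{F}_{k+\gamma-1}$ using that this class contains $F_{k+\gamma-1}$ and is closed under substitution, projection, and limited recursion, with the inductive hypothesis placing $L_{k-1,f}$ in $\mathfrak{F}_{k+\gamma-2}$. The part needing the most care is making the bookkeeping of the shifts uniform, so that the iterate is genuinely a \emph{limited} (hence $\mathfrak{F}$-definable) recursion on the controlled argument rather than an unbounded one.
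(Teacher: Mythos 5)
First, a point of comparison: the paper does not prove this theorem at all---it is imported from \cite{FFSS2011}---so your proposal must be measured against that proof. Your base case is correct, as is the observation that each block, having a non-decreasing first coordinate, projects to a bad sequence in $\Nb^{k-1}$ with the shifted control $t+p$. The genuine gap is the block count: the claim that there are at most $f(t)+2$ blocks is false. A cut only forces the first coordinate to drop below its value at the \emph{end} of the previous block, and within a block it may climb back up as far as the control allows, so the starting values of successive blocks need not form a descending chain. Concretely, work in $\Nb^{3}$ with $f(x)=x+1$ and $t\ge 1$: take a longest $t$-controlled bad sequence in coordinates $2,3$ (already over $\N^2$ such sequences have length exponential in $t$, e.g.\ $(t-1,M_1),(t-1,M_1-1),\dots,(t-1,0),(t-2,M_2),\dots$ where each segment's budget roughly doubles under the growing control), and let the first coordinate alternate $1,0,1,0,\dots$. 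Badness of a projection implies badness of the full tuples, the control is respected since $f(i+t)\ge 2$, and the first coordinate strictly descends at every second position---exponentially many blocks against your claimed bound of $t+3$. With that count gone, the recurrence $L_{k,f}(t)\le L_{k-1,f}^{\,f(t)+2}(t)$ is unsupported, and it cannot be repaired by bounding the number of blocks in terms of the sequence length, since that is exactly the quantity you are trying to bound.

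This is not a bookkeeping defect but the reason the cited proof uses different machinery: \cite{FFSS2011} does not split along a designated coordinate, but works with normed wqos and a \emph{descent equation}. After an element $x$, the tail of any bad sequence lies in the residual $\{y \;|\; x\not\sqsubseteq y\}$, which for $\Nb^k$ reflects into a finite sum of products of lower-dimensional spaces with finite chains; the branching of this decomposition is bounded \emph{a priori} (by $k$ and the norm of $x$, hence by the control), and an amortized analysis of the accumulated control along this induction---on the wqo structure, not merely on the dimension $k$---is what places $L_{k,f}$ in $\mathfrak{F}_{k+\gamma-1}$. Your instinct that the climb of one hierarchy level per dimension should come from an $f$-controlled iteration matching $F_{k+1}(x)=F_k^{x+1}(x)$ is the right one, but to make it sound you need a decomposition whose multiplicity is controlled before you know the sequence, such as the residual construction, rather than cuts at the descents of one coordinate.
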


\begin{corollary}
    \Tuniv\ of \OCN\ is Ackermannian.
\end{corollary}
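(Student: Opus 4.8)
The plan is to sandwich the complexity of \Tuniv\ for \OCN\ between matching lower and upper bounds at level $\omega$ of the Fast-Growing Hierarchy. The lower bound is essentially already in hand: the construction in the proof of \cref{thm:univ-ack} is a logarithmic-space reduction from the control-state reachability problem for \textIncCMs, and that problem is not merely non-primitive recursive but complete for level $\omega$ (Ackermann-complete) by \cite{DL2009,FFSS2011}. Since the reduction is honest, blowing up the instance only polynomially, it transfers this hardness to \Tuniv, so the problem is Ackermann-hard.

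For the upper bound I would exploit the macrostate characterization. By \cref{lem:macroprops}.\ref{lem:macroprops:traces} a process $qn$ is non-universal exactly when $M_{\{qn\}}\WStep{}{*}{}\bot^k$, so it suffices to bound the length of a shortest such path. The key observation is that the macrostate step is monotone with respect to the covering order: if $M\sqsubseteq M'$, $M\WStep{a}{}{}N$ and $M'\WStep{a}{}{}N'$, then $N\sqsubseteq N'$. Consequently, if a shortest witnessing sequence $x_0,x_1,\dots,x_l$ contained indices $i<j$ with $x_i\sqsubseteq x_j$, one could drive $x_i$ directly by the word read after $x_j$ and, using monotonicity together with \cref{lem:macroprops}.\ref{lem:macroprops:wqo}, still reach $\bot^k$ along a strictly shorter word, contradicting minimality. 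Hence every shortest witness induces a \emph{bad} sequence in $\Nb^k$.

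It then remains to bound the length of such bad sequences and read off the complexity. By \cref{lem:macroprops}.\ref{lem:macroprops:norminc} each step increases the norm by at most one, so $\norminf{x_i}\le \norminf{x_0}+i=n+i$; the sequence is therefore $n$-controlled by the successor function $f(x)=x+1\in\mathfrak{F}_1$. Instantiating \cref{thm:badseq} with this $f$ and $\gamma=1$ gives a bound $L_{k,f}(n)\in\mathfrak{F}_k$ on the length of a shortest witness, where $k$ is the number of control-states. A nondeterministic procedure that stepwise guesses a word of length at most $L_{k,f}(n)$ and checks that it drives $M_{\{qn\}}$ to $\bot^k$ thus runs within resources bounded by $(k,n)\mapsto L_{k,f}(n)$, whose diagonal over $k$ lies in $\mathfrak{F}_\omega$. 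This places \Tuniv\ in $\mathfrak{F}_\omega$, and together with the lower bound shows that it is Ackermannian.

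The main obstacle I anticipate is the upper-bound bookkeeping rather than a single deep step. One must verify the monotonicity of the macrostate step and the minimality-to-bad-sequence argument carefully, and — crucially — confirm that \cref{thm:badseq} is applied so that the bound is parametrised by the input quantity $k$ in a way whose diagonalisation lands exactly at level $\omega$, rather than collapsing to a fixed finite level. On the lower-bound side the only subtlety is checking that the reduction underlying \cref{thm:univ-ack} is resource-bounded tightly enough to preserve the full $\mathfrak{F}_\omega$ hardness and not merely non-primitive recursiveness.
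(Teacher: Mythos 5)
Your proposal is correct and takes essentially the same route as the paper: the lower bound is exactly the reduction of \cref{thm:univ-ack}, and the upper bound bounds a shortest witnessing macrostate sequence (which must be bad, by the monotonicity/minimality argument the paper sketches before the proof) via \cref{thm:badseq} with the successor control function $f(x)=x+1\in\mathfrak{F}_1$, yielding $\mathfrak{F}_k$ for fixed $k$ and $\mathfrak{F}_\omega$ after diagonalising over the input parameter $k$. The only cosmetic difference is that you invoke full Ackermann-hardness of \incCM\ control-state reachability, whereas under the paper's definition of Ackermannian ($\mathfrak{F}_\omega\setminus\bigcup_{k\in\N}\mathfrak{F}_k$) the non-primitive-recursive bound of \cref{thm:univ-ack} already suffices.
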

\begin{proof}
    By \cref{thm:univ-ack}, it suffices to show that the problem is
    in $\mathfrak{F}_\omega$.
    Recall the procedure that, for a given process $pm$ of a net with
    $k$ control-states, guesses a shortest terminating path from the initial
    macrostate 
    (a witness for non-universality),
    and stops unsuccessfully if a macrostate covers
    one that has been seen before.
    The time and space requirements of this procedure
    are bounded in terms of the longest non-increasing (w.r.t.~covering) sequence
    of $k$-dimensional macrostates.
    These are bad sequences where the norm of the initial macrostate is $m$, the
    counter-value of the process to check for universality.
By point~\ref{lem:macroprops:norminc} of \cref{lem:macroprops}, such sequences
are $m$-controlled by the successor function $f(x)=x+1$, which is in $\mathfrak{F}_1$.
By \cref{thm:badseq},
computing the bound and running the procedure above
is in $\mathfrak{F}_k$.
As $k$ is part of the input, this yields a procedure in $\mathfrak{F}_\omega$.
\qed
\end{proof}

    \section{Conclusion}
    We have shown NL-completeness of the general \tinc\ problem for
deterministic \textOCNs, where initial counter-values are part of the input.
%
Our proof is based on a characterization of the shape of possible witnesses
in terms of a small number of polynomially-sized templates.
Realizability of such templates can be verified in space
logarithmic only in the size of the underlying state space.
Our procedure is therefore independent of the number of action symbols and transitions in the input nets.
To prove the characterization theorem we use witness rewriting rules,
the correctness of which crucially depends on the monotonicity of \tinc\ w.r.t.~counter-values.
In fact, we only make use of this property in the net on the left but
similarly one can define rules that exploit only the monotonicity
in the process on the right.
With some additional effort one can extend this argument
also for trace inclusion between \DOCN\ and \DOCA\ or vice versa
(see \cite{thesis}).

The second part of the paper explores the complexity of the
universality problem for nondeterministic \OCN,
and \tinc\ between finite systems and \OCN\ that easily reduces to
\OCN\ universality.
Here we show that the simplest known algorithm which uses
a well-quasi-order based saturation technique has already optimal complexity:
The problem is Ackermannian, i.e., not primitive recursive.

    \paragraph*{Acknowledgement.}
    We thank Mary Cryan, Diego Figueira and Sylvain Schmitz
for helpful discussions and the anonymous reviewers of an earlier draft for their constructive feedback.
Piotr Hofman acknowledges a partial support by the Polish NCN grant 2013/09/B/ST6/01575.

    \printbibliography[heading=bibintoc]

    \newpage
    \appendix

    \section{Normal-Form Assumption}
    \label{app:reduction}
    We consider here what is sometimes called \emph{realtime}
automata, in which no silent ($\varepsilon$-labelled) transitions are present.
In the absence of zero-tests, the usual syntactic restriction
for deterministic Pushdown Automata, 
(no state with outgoing $\varepsilon$-transition
may have outgoing transitions labelled by $a\neq\varepsilon$)
and the lack of an explicit zero-test in our model
implies that all states on $\varepsilon$-cycles are essentially deadlocks.
A process in such a state can either silently exhaust the counter
and deadlock or divert into an infinite $\varepsilon$ loop.
With respect to their traces, those processes are equivalent.
This means one can eliminate 
$\varepsilon$-transitions by removing $\varepsilon$-cycles
and replacing the remaining short paths by direct steps
(and normalize the effects of single transitions back to $\{-1,0,1\}$).
Such a reduction works in $\O(\log n)$ space.
Allowing $\varepsilon$-transitions thus leaves the complexity of \tinc\ invariant.

\lemreduction*

\begin{proof}
    Let $\AN{A}=(N,\Act,\delta_A)$ and 
    $\AN{B}=(M,\Act',\delta_B)$.
    If $\AN{A}$ is not already deterministic,
    we can make it so by uniquely re-labeling all its transitions $t$ by
    actions $a_t$ and 
    adding corresponding transitions $(p',a_t,d',q')$ to the other net $\AN{B}$
    for any existing $(p',a,d',q')\in\delta_B$, where $a$ is the original label
    of $t\in\delta_A$.
    So assume $\AN{A}$ is deterministic and pick a new action label
    $\$\not\in Act$.
   We
    add
    $\$$-labelled cycles with effect $0$ to all states of $\NN{A}$:
    The new net $\AN{A'}=(N,\Act\cup\{\$\},\ol{\delta_A})$ has transitions
    $\ol{\delta_A} = \delta_A \cup\{(s,\$,0,s)|s\in N\}$.
    To compensate this, we add $\$$-cycles to all states of $\AN{B}$ in the same way.
    We add a sink state $L$ (for losing),
    which has counter-decreasing cycles for all actions,
    and connect all states without outgoing
    $a$-transitions to $L$ by $a$-labelled transitions.
    $\AN{B'}=(M\cup\{L\}, \Act\cup\{\$\},\ol{\delta_B}{})$ where
    \begin{align*}
        \ol{\delta_B}{} = \delta_B &\cup \{(s,\$,0,s)\;|\;s\in M\}\\
        &\cup \{(s,a,0,L)\;|\; a\in\Act\ and\ s\step{a} s'\not\in \delta \ for\ any\ s'\in M\ ) \}\\
        &\cup \{(L,a,-1,L\;|\;a\in \Act\cup\{\$\}\}.
    \end{align*}
    We see that if a word $w$ of length $k$
    witnesses non-inclusion
    $T_\NN{A}(qn)\not\subseteq T_\NN{B}(q'n')$
    then there is a word that witnesses non-inclusion
    $T_\NN{A'}(qn)\not\subseteq T_\NN{B'}(q'n')$
    To see this, 
    observe that in this case,
    any $w$-labelled path
    in $\AN{B'}$ that starts in state $q'$ 
    must end in state $L$.
    This means any such path takes the initial process $q'n'$ to
    some process $Ln''$ where $n''\le n'+k$ and now by playing $n''$ times a label $\$$ we get a new witness.
    Conversely, if 
    there is a witness $w$ for $T_\NN{A'}(qn)\not\subseteq T_\NN{B'}(q'n')$
    then the shortest such witness
    must be of the form $w=w'\$^k$ where $w'$ does not contain actions $\$$
    because as $\$$-labelled steps leave any process not in state $L$ unchanged.
    This means $w'$ witnesses $T_\NN{A}(qn)\not\subseteq T_\NN{B}(q'n')$.
    \qed
\end{proof}

    \section{Checking Weighted Inequalities in Logspace}
    \label{app:inequalities}
\begin{lemma}\label{lem:complexity}
    Inequalities
    of the form $m \cdot A + B \ge n\cdot C + D$
    where all coefficients are non-negative integers given in binary
    can be verified in $\O(\log (A+B+C+D))$ deterministic space.
\end{lemma}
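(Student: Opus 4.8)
The plan is to reduce the comparison $mA + B \ge nC + D$ to deciding the sign of the single integer $Z = mA + B - nC - D$, and to extract this sign by one least-significant-bit-first sweep that never stores more than $\O(\log(A+B+C+D))$ bits at a time. Writing $m = \sum_i m_i 2^i$ and $n = \sum_i n_i 2^i$ in binary, we have $Z = \sum_i (m_i A - n_i C)\,2^i + (B-D)$, so $Z$ can be accumulated position by position in schoolbook multiply-and-add fashion. The inputs $m$ and $n$ are scanned directly with the read-only input head, which is not charged to the work tape, so the only working memory is a single signed register.

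Concretely, I would keep one signed register $r$, initialised to $r := B - D$, and process positions $i = 0,1,2,\dots$ in increasing order. At step $i$ I read the bits $m_i$ and $n_i$ off the input (feeding $0$ once a number's representation is exhausted), set $b := r + m_i A - n_i C$, emit the low bit $z_i := b \bmod 2$, and update $r := \lfloor b/2\rfloor$. I would run this until $i$ exceeds the bit-lengths of both $m$ and $n$, and then answer ``$mA+B \ge nC+D$'' exactly when the final value of $r$ is non-negative.

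Correctness rests on two invariants I would verify by a short induction. First, the register stays bounded: with $M := A+B+C+D$ one checks that $|r|\le M$ (up to an additive constant) is preserved, since $|b| \le |r| + \max\{A,C\}$ and floor-halving roughly halves the magnitude, so $r$ fits in $\O(\log(A+B+C+D))$ bits throughout, matching the claimed budget. Second, the sweep maintains the exact identity $Z = L_i + 2^{i+1} r_i$, where $L_i = \sum_{j=0}^{i} z_j 2^j$ is the emitted low part and satisfies $0 \le L_i < 2^{i+1}$; the recurrence $r_{i-1} + (m_iA - n_iC) = 2 r_i + z_i$ feeds straight into the inductive step. Once $i$ has passed all set bits of $m$ and $n$, every later increment vanishes, and the bound $0 \le L_i < 2^{i+1}$ forces $Z \ge 0 \iff r_i \ge 0$: if $r_i \ge 0$ then $Z \ge L_i \ge 0$, while if $r_i \le -1$ then $Z \le (2^{i+1}-1) - 2^{i+1} < 0$.

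The main obstacle I anticipate is the sign bookkeeping, since $B-D$ and the increments $m_iA - n_iC$ may be negative: one must confirm that floor-division of a negative $b$ behaves correctly, that the carry bound survives the off-by-one introduced by $\lfloor \cdot/2\rfloor$ on negative odd values (handled by enlarging $M$ by a constant), and that the residual register genuinely reflects $\mathrm{sign}(Z)$ rather than being corrupted by the already-emitted low bits. Establishing the displayed identity together with the carry bound is the crux; the remaining work, namely reading individual bits and computing $b \bmod 2$ and $\lfloor b/2\rfloor$ on an $\O(\log(A+B+C+D))$-bit register, is routine and manifestly within the stated space bound.
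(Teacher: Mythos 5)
Your proof is correct, but it routes the comparison differently from the paper. The paper also sweeps least-significant-bit first, yet it never subtracts: it runs two on-the-fly long multiplications, maintaining one non-negative scratchpad for $m\cdot A + B$ (seeded with $B$) and one for $n\cdot C + D$ (seeded with $D$), shifting $A$ resp.\ $C$ into them as the bits of $m$ and $n$ are read; since bit $i$ of each scratchpad is final after step $i$, it can be peeled off and fed into a trivial LSB-first comparator --- a single flag recording which side won at the most significant difference seen so far --- and each scratchpad stays bounded by $B+A+1$ resp.\ $D+C+1$. You instead fuse everything into the single signed quantity $Z = mA + B - nC - D$, keep one signed carry register, and decide by its final sign via the bound $0 \le L_i < 2^{i+1}$. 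What your route buys is one accumulator instead of two and no comparator; what it pays is the signed-carry bookkeeping (floor division of negative odd values, the off-by-one in the magnitude bound), which you correctly identify and handle, whereas the paper's invariants live entirely on non-negative integers. One presentational nit: the identity $Z = L_i + 2^{i+1} r_i$ as literally stated holds only for the truncated sum $(B-D) + \sum_{j\le i}(m_jA - n_jC)2^j$, coinciding with $Z$ once $i$ has passed all set bits of $m$ and $n$ --- your closing remark shows you intend exactly this, so restate the invariant with the partial sum and the induction goes through unchanged. Both arguments ultimately rest on the same two facts: the carry/scratchpad occupies $\O(\log(A+B+C+D))$ bits because each step adds at most $\max\{A,C\}$ and then halves, and $m,n$ are only streamed from the read-only input, never stored.
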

\begin{proof}
Assume w.l.o.g.~that the bit-representations of $m$ and $n$ are of the same length,
as are those of $A,B,C$ and $D$, and we have the least significant bit on the right.

To check $m\ge n$, we can stepwise read their binary representation
from right to left, flipping
an ``output'' bit $\textit{Out}$ on the way:
Initially, $\textit{Out}:=1$; 
in every step set $\textit{Out}:=0$ if the current bit in $m$
is strictly smaller than that in $n$; set
$\textit{Out}:=1$ if the current bit in
$m$ is strictly bigger than that in $n$
and otherwise proceed without touching $\textit{Out}$.
The inequality holds iff
$\textit{Out}=1$ after completely reading the input.

To check the weighted variant, 
we use the same algorithm but multiply $m\cdot A$, and $n\cdot C$ on the fly,
using standard long binary multiplication.
We use a scratchpad to store the intermediate sums, starting with
values $B$ and $D$.
In a step that reads the $i$th bit $m[i]$ of $m$,
we want to add $A\cdot 2^i$ to the intermediate sum if $m[i]=1$.
We can do that by shifting the binary representation of $A$
left $i$ times and adding the result to the current scratchpad.
We see that none of the bits up to $i-1$ in the scratchpad
are affected by this operation. We can therefore
discard (and use for the comparison in our simple algorithm above)
the rightmost bit of the scratchpad in every step.
The claim now follows from the observation
that the necessary size of the scratchpad is bounded by $B+A+1$.
\qed
\end{proof}

    \newpage
     \section{Proof of Theorem \ref{thm:form}}
    \label{sec:form_proof}
    We show that it is safe to consider only witnesses in a reduced form,
and derive bounds on the length of certain subpaths.
For this, we introduce path rewriting rules that exchange
occurrences of some loops by others.
We then show (in \cref{lem:witness_preservation}) that these rules preserve
witnesses and (\cref{lem:rules_wqo}) cannot be applied
indefinitely.
For \emph{reduced} witnesses, those to which no rules are applicable,
we derive (\cref{lem:bounds}) bounds on the
multiplicities of loops that are less productive than others,
which will enable us to prove \cref{thm:form}.

For the rest of this section let $V$ and $E$ be the sets of nodes and edges in the product of
$\NN{A}$ and $\NN{B}$.

We start with an easy observation:
Because no loop $L$ is longer than $|V|$, we conclude that $(\effect{\SP}(L),\effect{\DUP}(L))\in \{-V\ldots V\}\times \{-V\ldots V\},$ so there are $F_0 := (2\cdot|V|+1)^2$ 
different values the pair $\effect{\SP}(L),\effect{\DUP}(L)$
can have.
Moreover, if a witness exists, then also one that
does not contain different loops with the
same effects: if
$\pi_0L_0\pi_1L_1\pi_2$ is a witness where $|\pi_1|>0$ and $L_0,L_1$ are two loops with
$\effect{}(L_0)=\effect{}(L_1)$,
then either some prefix of $\pi_0L_0^2\pi_1\pi_2$ (if $\effect{A}(L_0)\ge 0$)
or some prefix of $\pi_0\pi_1L_1^2\pi_2$
(if $\effect{A}(L_0)<0$) 
must also be a witness by \cref{lem:witness-monotonicity}.
We can therefore consider w.l.o.g.~only \emph{sane} paths,
which are of the form
    \begin{equation}\label{eq:form:sane}
        \pi=\pi_0L_0^{l_0}\pi_{1}L_{1}^{l_1}\dots\pi_rL_r^{l_r}\pi_{r+1}
    \end{equation}
where $r\le F_0$, all $\pi_i$ are acyclic and all loops have pairwise different effects.

\begin{definition}[Path Rewriting Rules]
    \label{def:rules}
  Consider the rules given
  below. 
  
  \begin{center}
\newcommand{\RuleNameW}{1.2cm}
\newcommand{\RuleW}{3.74cm}

\Ruledef{\RuleNameW}{\RulePL}{\RuleW}{
    $\pi=\pi_0L_0^{l_0}\pi_1L_1^{l_1}\pi_2$\\
    $Type (L_0) = \TypeUU$\\
    $Type (L_1) = \TypeUU$\\
    $\Delta_\DUP(L_0) \cdot x = \Delta_\DUP(L_1) \cdot y$\\
    $S(L_0)\ge S(L_1)$\\
    $l_1-y>0$
}{
    $\rho=\pi_0L_0^{l_0+x}\pi_1L_1^{l_1-y}\pi_2$
}
\hspace{0.5cm} 
\Ruledef{\RuleNameW}{\RulePR}{\RuleW}{
    $\pi=\pi_0L_0^{l_0}\pi_1L_1^{l_1}\pi_2$\\
    $Type (L_0) = \TypeUU$\\
    $Type (L_1) = \TypeUU$\\
    $\Delta_\DUP(L_0) \cdot x = \Delta_\DUP(L_1) \cdot y$\\
    $S(L_0)< S(L_1)$\\
    $l_0-x>|\pi_1L_1|$
}{
    $\rho=\pi_0L_0^{l_0-x}\pi_1L_1^{l_1+y}\pi_2$
}

\vspace{0.5cm} 

\Ruledef{\RuleNameW}{\RulePN}{\RuleW}{
    $\pi=\pi_0L_0^{l_0}\pi_1L_1^{l_1}\pi_2$\\
    $Type (L_0) = \TypeUU$\\
    $Type (L_1) = \TypeDD$\\
    $\Delta_\DUP(L_0) \cdot x = - \Delta_\DUP(L_1) \cdot y$\\
    $S(L_0)\le S(L_1)$\\
    $l_0-x\ge |\pi_1|$\\
    $l_1-y>0 \land l_0-x>0$
}{
    $\rho=\pi_0L_0^{l_0-x}\pi_1L_1^{l_1-y}\pi_2$
}
\hspace{0.01cm} 
\Ruledef{\RuleNameW}{\RuleNL}{\RuleW}{
    $\pi=\pi_0L_0^{l_0}\pi_1L_1^{l_1}\pi_2$\\
    $Type (L_0) = \TypeDD$\\
    $Type (L_1) = \TypeDD$\\
    $\Delta_\DUP(L_0) \cdot x = \Delta_\DUP(L_1) \cdot y$\\
    $S(L_0)< S(L_1)$\\
    $l_1>|L_0|\cdot x + 2|\pi_1|$\\
    $l_1-y>0$
}{
    $\rho=\pi_0L_0^{l_0+x}\pi_1L_1^{l_1-y}\pi_2$
}
\hspace{0.01cm} 
\Ruledef{\RuleNameW}{\RuleNR}{\RuleW}{
    $\pi=\pi_0L_0^{l_0}\pi_1L_1^{l_1}\pi_2$\\
    $Type (L_0) = \TypeDD$\\
    $Type (L_1) = \TypeDD$\\
    $\Delta_\DUP(L_0) \cdot x = \Delta_\DUP(L_1) \cdot y$\\
    $S(L_0)\ge S(L_1)$\\
    $l_0-x>0$
}{
    $\rho=\pi_0L_0^{l_0-x}\pi_1L_1^{l_1+y}\pi_2$
}

%
%
    
  \end{center}

  Each rule consists of \emph{conditions} (lines above the bar)
  and a conclusion $\rho$, which is a path, below the bar.
Their names indicate which type of loops are handled:
E.g., \RulePL\ exchanges loops of type $\TypeUU$ (up)
for others of the same type on the left.
  
  We say a rule is \emph{applicable} to a sane path $\pi$
  if there are $0<x,y,l_0,l_1\in\N$ and two different loops $L_0$ and $L_1$
  such that all conditions are satisfied.
  In this case the rule can rewrite $\pi$ to $\rho$,
  its conclusion and we say $\rho$ is the result of applying the rule to $\pi$.
\end{definition}
\begin{figure}[H]
\label{fig:rules}
\end{figure}

\begin{example}
Consider \cref{ex:witness} again:
The substitution suggested there is an application of the rule
\RulePL\ to the
path $\pi=(t_0t_1t_2)(t_3t_4)^9t_5(t_6)^{20}$,
where $L_0=(t_0t_1t_2)$, $L_1=(t_3t_4)$
and $x=y=8$. The result is a
reduced witness for $(p0,p'10)$ of length $50$.
Shorter reduced witnesses for $(p0,p'10)$ exist, for example
$(t_0t_1t_2)^6t_5t_6^{16}$, but because of their different loop structure, these
cannot be obtained from $\pi$ by applying rewriting rules,
as these do not change the structure, i.e., which loops
occur and in which order, of a path.
This means that our rules do not necessarily preserve minimality of witnesses.
\end{example}

In the next two \cref{lem:witness_preservation,lem:rules_wqo},
we show that the rewriting rules preserve witnesses
and that
continuous rule application must eventually terminate.
\begin{restatable}[]{lemma}{lemwitnesspreservation}
\label{lem:witness_preservation}
    If $\pi$ is a sane witness for $(pm,p'm')$
    and $\rho$ is the result of applying one of the rules
    to $\pi$, then $\rho$ is also a sane witness for $(pm,p'm')$.
\end{restatable}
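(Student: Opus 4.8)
The plan is to show, for each of the five rules separately, that the rewritten path $\rho$ is (i) again sane, (ii) enabled in $(pm,p'm')$ in both components, and (iii) ends in the same product node as $\pi$ with $\effect{\DUP}(\rho)=\effect{\DUP}(\pi)$ and $\effect{\SP}(\rho)\ge\effect{\SP}(\pi)$. These three facts already give the lemma: by (ii) and (iii), $\rho$ takes $(pm,p'm')$ to a pair $(p''m'',q'0)$ reaching the same $\NN{B}$-process $q'0$ that $\pi$ reaches (same target node, same $\NN{B}$-effect, and a witness exhausts $\NN{B}$), with $m''$ at least as large as the final $\NN{A}$-counter along $\pi$; by the step-monotonicity of \cref{lem:monotonicity}, $p''m''$ can then still perform the action that $q'0$ cannot, so $\rho$ is a witness. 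Saneness is immediate, since each rule only modifies the two exponents $l_0,l_1$ of loops already occurring in the sane decomposition of $\pi$, leaving the connecting subpaths and the pairwise distinct loop effects untouched, and the side conditions keep both exponents positive.

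First I would establish the two algebraic invariants behind (iii). In every rule the exchange is balanced on $\NN{B}$: the hypothesis $\Delta_\DUP(L_0)\cdot x=\pm\,\Delta_\DUP(L_1)\cdot y$ makes the net change of $\effect{\DUP}$ vanish, giving $\effect{\DUP}(\rho)=\effect{\DUP}(\pi)$. For $\NN{A}$, rewriting $\Delta_\SP(L_i)=S(L_i)\cdot\Delta_\DUP(L_i)$ and combining the relevant slope hypothesis with the same balancing equation yields $\effect{\SP}(\rho)\ge\effect{\SP}(\pi)$; this is a short computation that uses the signs of $\Delta_\DUP(L_i)$ fixed by the loop types and treats the degenerate infinite slopes ($\Delta_\DUP=0$) separately. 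I would also record the guard identity $\guard{}(L^k)=\guard{}(L)+(k-1)\,|\Delta(L)|$ for a decreasing loop ($\Delta(L)<0$) and $\guard{}(L^k)=\guard{}(L)$ for a non-decreasing one, since the whole enabledness analysis rests on it.

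The substance of the proof is enabledness (ii); I would split each path at its block boundaries and track both counter profiles. Several situations are free: prefixes shared with $\pi$ stay enabled; the guard of a $\TypeUU$ block is independent of its multiplicity, so changing the number of increasing copies never makes that block itself harder to traverse; adding increasing copies only raises the counter entering later blocks; and deleting a $\TypeDD$ block from the front raises every later entry value. Two modifications are controlled by the guard identity: when a decreasing block is shortened $L_1^{l_1}\to L_1^{l_1-y}$ (rules \RulePN,\ \RuleNL) or extended $L_1^{l_1}\to L_1^{l_1+y}$ (rule \RuleNR), the $\Delta_\DUP$-balancing shifts the value entering that block and, by the identity, its own guard by the \emph{same} amount, so the enabledness inequality for the modified block is equivalent to the one for the original block in $\pi$; for $\NN{A}$ the entry and guard shift by different amounts and the residual slack is exactly the rule's effect inequality $\effect{\SP}(\rho)\ge\effect{\SP}(\pi)$, already proved. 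The remaining family is when an increasing block is deleted from the front (rules \RulePR,\ \RulePN): this lowers the whole middle region by $x\,\Delta_\DUP(L_0)$ in both counters, and the surviving $L_0^{l_0-x}$ leaves (using $|\Delta|\ge1$) a counter of at least $l_0-x$ entering $\pi_1$; for \RulePR\ the condition $l_0-x>|\pi_1L_1|$ makes this exceed the guard of the increasing stretch $\pi_1L_1$ after which the counter only grows, while for \RulePN\ the condition $l_0-x\ge|\pi_1|$ suffices to reach the decreasing block $L_1^{l_1-y}$, which is then covered by the collapse above.

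I expect the main obstacle to be rule \RuleNL, which inserts a \emph{decreasing} loop $L_0$ at the front, so that the lengthened block $L_0^{l_0+x}$ demands strictly more counter than $\pi$ used there and enabledness is not inherited. I would handle it by aligning $\rho$ with $\pi$ along their common prefix $\pi_0L_0^{l_0}$, which brings both to the same value $b_0$; the only new demand is that the $x$ extra copies of $L_0$ be traversable from $b_0$, i.e.\ $b_0\ge\guard{}(L_0^{x})$, and $\guard{}(L_0^{x})\le x\,|L_0|$. The crucial point is that $b_0$ is necessarily large: since the tail block $L_1^{l_1}$ alone lowers the counter by $l_1\,|\Delta_\DUP(L_1)|\ge l_1$ before the run ends non-negative, one gets $b_0\ge l_1-|\pi_1|$, and the hypothesis $l_1>|L_0|\cdot x+2|\pi_1|$ forces $b_0>x\,|L_0|+|\pi_1|$, which covers both the extra $L_0$-copies and the guard $\le|\pi_1|$ of the following $\pi_1$; the shortened $L_1$-block is then dispatched by the guard-identity collapse above. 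Putting the block-wise bounds together yields $\guard{\SP}(\rho)\le m$ and $\guard{\DUP}(\rho)\le m'$, which is (ii) and completes the proof.
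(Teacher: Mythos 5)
Your proposal is correct and follows essentially the same route as the paper's proof: saneness is dismissed for the same reason, the substance is rule-by-rule enabledness of the rewritten path, the conclusion is wrapped up via monotonicity using that each exchange is $\NN{B}$-balanced and weakly $\NN{A}$-improving, and your treatment of \RuleNL\ (forcing the counter before the inserted copies of $L_0$ above $|L_0|\cdot x+|\pi_1|$ via the demand of the tail $L_1^{l_1}$ and the condition $l_1>|L_0|\cdot x+2|\pi_1|$) is precisely the paper's key step, argued there by contradiction. Your explicit guard identity $\guard{}(L^k)=\guard{}(L)+(k-1)|\effect{}(L)|$ and block-profile bookkeeping merely systematize what the paper does case by case, and your flagged separate treatment of loops with $\effect{\DUP}=0$ covers the one spot (rules \RulePR, \RulePN) where the ``counter at least $l_0-x$ in both nets'' claim needs $\effect{\DUP}(L_0)\ge 1$.
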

\begin{proof}
    Each rule only modifies the number of times
    some loops are iterated, and never completely removes
    a loop. Therefore, sane paths are always rewritten to other sane paths.

    Let's say we rewrite $\pi=\pi_0L_0^{l_0}\pi_1L_1^{l_1}\pi_2$ to $\rho$.
    The key observation is that the conditions of the rule
    imply that we can always decompose
    the paths $\pi$ and $\rho$ into $\pi=\alpha\gamma$
    and $\rho=\beta\gamma$, s.t.\
    $\deffect{\alpha}=\deffect{\beta}$ and 
    $\seffect{\alpha}\le \seffect{\beta}$.
    By monotonicity (\cref{lem:monotonicity})
    and the assumption that $\pi$ is a witness,
    it is therefore sufficient to show that
    the result $\rho$ is still enabled in the initial position $(pm,p'm')$.
    We proceed by case distinction for the used rule.
%

    \RulePL. 
    Since $\pi$ is a witness, its prefix $\alpha=\pi_0L_0^{l_0}\pi_1L_1^{l_1}$
    must be enabled in $(pm,p'm')$ and because $\Type{L_0}=\TypeUU$,
    so is the prefix $\beta=\pi_0L_0^{l_0+x}\pi_1L_1^{l_1-y}$ of the result
    $\rho$. Assume that $(pm,p'm')\step{\alpha}(qn,q'n')$
    and
    $(pm,p'm')\step{\beta}(q\hat{n},q'n')$.
    The condition $S(L_0)\ge S(L_1)$ of the rule implies that $\hat{n}\ge n\ge\sguard{\pi_2}$
    and therefore that $\rho$ is enabled in $(pm,p'm')$.
%
%
    
    \RulePR.
    The prefix $\pi_0L_0^{l_0-x}$ of $\pi$ must be enabled
    and since the last condition of the rule demands that $l_0-x>|\pi_1L_1|$,
    so is the path $\pi_0L_0^{l_0-x}\pi_1L_1$.
    The fact that $\Type{L_1}=\TypeUU$, means that also
    $\pi_0L_0^{l_0-x}\pi_1L_1^{l_1+y}$ and therefore the result $\rho$
    is enabled in $(pm,p'm')$.

\RulePN.
    $\Type{L_1}=\TypeDD$ implies $S(L_1)<\infty$. Since $S(L_0)<S(L_1)$, 
    we know that $S(L_0)<\infty$ and hence $\deffect{L_0}>0$.
    The path $\pi_0L_0^{l_0-x}$ is a prefix of $\pi$ and is therefore
    enabled in $(pm,p'm')$.
    As $l_0-x\ge |\pi_1|$ by assumption, we get that
    \begin{equation}
        m+\seffect{\pi_0L_0^{l_0-x}}\ge l_0-x\ge|\pi_1| \ge\sguard{\pi_1}
    \end{equation}
    and similarly, by $\deffect{L_0}>0$,
    \begin{equation}
        m'+\deffect{\pi_0L_0^{l_0-x}}\ge l_0-x\ge|\pi_1| \ge\dguard{\pi_1}.
    \end{equation}
    This means that the prefix $\beta=\pi_0L_0^{l_0-x}\pi_1$ of $\rho$ is enabled in $(pm,p'm')$.
    Let us now consider the prefix $\alpha=\pi_0L_0^{l_0-x}L_0^x\pi_1L_1^{y}$ of
    $\pi$.
    Because $\deffect{L_0}\cdot x = - \deffect{L_1}\cdot y$ we get
    $\deffect{\alpha} = \deffect{\beta}$.
    By $S(L_0)<S(L_1)$ we obtain that
    $\seffect{\alpha} \le \seffect{\beta}$.
    Because $\pi=\alpha L_1^{l_1-y}\pi_2$ is a witness for $(pm,p'm')$,
    we can apply \cref{lem:monotonicity} to conclude
    $\rho=\beta L_1^{l_1-y}\pi_2$ must be a witness for $(pm,p'm')$.

    \RuleNL.
    We know that $m+\seffect{\pi_0L_0^{l_0}}+\seffect{\pi_1} \ge \sguard{L_1^{l_1}}$,
    because $\pi$ is enabled in $(pm,p'm')$.
    As $L_1$ is a type $\TypeDD$ loop we also know that $\seffect{L_1}<0$.
    Therefore, $\sguard{L_1^{l_1}}\ge l_1$ and
    \begin{equation}
        \label{eq:shift-negative-left:2}
        m+\seffect{\pi_0L_0^{l_0}} \ge l_1 -\seffect{\pi_1}.
    \end{equation}
    Assume towards a contradiction that
    $m+\seffect{\pi_0L_0^{l_0}} < \sguard{L_0^x\pi_1}$.
    This means that
    \begin{equation}
        m+\seffect{\pi_0L_0^{l_0}} <\sguard{L_0^x} +|\pi_1| \le |L_0|\cdot x +|\pi_1|.
    \end{equation}
    This, together with \cref{eq:shift-negative-left:2} yields $l_1-\seffect{\pi_1}<|L_0|\cdot x +|\pi_1|$
    and thus $l_1<|L_0|\cdot x +2|\pi_1|$
    which contradicts the condition that $l_1>|L_0|\cdot x +2|\pi_1|$.
    Hence, $ m+\seffect{\pi_0L_0^{l_0}} \ge \sguard{L_0^x\pi_1}$.
    By the same argument we get that $m'+\deffect{\pi_0L_0^{l_0}}\ge\dguard{L_0^x\pi_1}$.
    So the prefix $\beta=\pi_0L_0^{l_0+x}\pi_1$ of $\rho$
    is enabled in $(pm,p'm')$.
    Consider the prefix $\alpha=\pi_0L_0^{l_0}\pi_1L_1^y$ of $\pi$.
    By the assumption that $\deffect{L_0^x} = \deffect{L_1^y}$ we get
    that $\deffect{\alpha}=\deffect{\beta}$.
    Because of $S(L_0)<S(L_1)$
    we get $\seffect{L_0^x} \ge \seffect{L_1^y}$ and therefore
    that 
    $\seffect{\alpha}\le\seffect{\beta}$.
    By \cref{lem:monotonicity} we conclude that
    the path $\rho=\beta L_1^{l_1-y}\pi_2$ is a witness for $(pm,p'm')$.
    
    \RuleNR. 
    Let $\alpha=\pi_0L_0^{l_0}\pi_1$ and let
    $(pm,p'm')\step{\alpha}(qn,q'n')$.
    Due to the type of $L_0$ and because $\pi$ is a witness,
    we know that the prefix
    $\beta=\pi_0L_0^{l_0-x}\pi_1L_1^{y}$ of $\rho$
    is enabled in $(pm,p'm')$.
    Since $\deffect{L_0}\cdot x=\deffect{L_1}\cdot y$,
    we get that $(pm,p'm')\step{\beta}(q\hat{n},q'n')$
    for some $\hat{n}\in\N$.
    The condition
    $S(L_0)\ge S(L_1)$ of the rule
    implies that $\seffect{L_0^x}\le \seffect{L_1^y}<0$,
    and therefore that $\hat{n}\ge n$.
%
    We conclude that the path $L_1^{l_1}\pi_2$
    is enabled in $(qr,q'r')$
    and therefore that $\rho=\pi_0L_0^{l_0-x}\pi_1L_1^{l_1+y}\pi_2$
    is enabled in $(pm,p'm')$ as required.

\qed
\end{proof}


\begin{restatable}[]{lemma}{lemruleswqo}
\label{lem:rules_wqo}
  Any sequence of successive applications of rules to a given
  path $\pi$ must eventually terminate.
\end{restatable}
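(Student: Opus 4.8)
The plan is to exhibit a ranking function into a well-founded order that strictly decreases with every rule application. The first thing I would observe is that each rule only alters the iteration counts of two loops and, because of the side conditions $l_0-x>0$ and $l_1-y>0$, never drives a count down to $0$; hence the \emph{support} of a sane path (which loops occur, in which order, together with the interleaving acyclic pieces $\pi_i$) is invariant under rewriting. I may therefore fix this support once and for all and regard the current state simply as the vector $(l_0,\dots,l_r)$ of loop multiplicities. A second invariant is the total effect on $\DUP$: the balance condition $\effect{\DUP}(L_0)\cdot x=\pm\,\effect{\DUP}(L_1)\cdot y$ present in every rule is exactly what makes $\effect{\DUP}(\rho)=\effect{\DUP}(\pi)$ (this is also what keeps $\rho$ a witness, cf.\ \cref{lem:witness_preservation}); as a consequence the $\effect{\DUP}$-mass carried by the type-$\TypeUU$ loops and that carried by the type-$\TypeDD$ loops are each conserved by \RulePL, \RulePR, \RuleNL, \RuleNR, and only jointly reduced by \RulePN. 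Finally, only loops of type $\TypeUU$ or $\TypeDD$ ever appear in a rule, so the counts of the remaining ($\TypeDU,\TypeUD$) loops are frozen and may be ignored.

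On this state space I would take the lexicographic combination $\mu(\pi)=(\Phi,-\Psi,\Lambda,\vec G)$ of the following quantities. Let $\Phi=\sum_{i:\,\effect{\DUP}(L_i)\neq0} l_i\,|\effect{\DUP}(L_i)|$ be the total cancellable $\effect{\DUP}$-mass: a non-negative integer, preserved by the four slope-shifting rules and strictly decreased by \RulePN, which deletes $x\cdot\effect{\DUP}(L_0)>0$ of up-mass and the matching amount of down-mass. Let $\Psi=\sum_{i:\,\effect{\DUP}(L_i)\neq0} l_i\,\effect{\SP}(L_i)$ be the effect on $\SP$ contributed by the finite-slope loops; using $\effect{\SP}(L)=S(L)\cdot\effect{\DUP}(L)$ one checks that each slope-shifting rule changes $\Psi$ by $w\cdot(S_{\text{target}}-S_{\text{source}})$ for a positive exchanged $\effect{\DUP}$-mass $w$, so the slope side conditions force $\Psi$ never to decrease, and to increase strictly unless the two loops have equal slope. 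Since $|\Psi|$ is at most the largest finite slope magnitude times $\Phi$, once $\Phi$ is fixed $-\Psi$ is bounded below. The remaining equal-slope ties are absorbed by $\Lambda=\sum_{\text{finite up }i} i\,l_i|\effect{\DUP}(L_i)|-\sum_{\text{down }i} i\,l_i|\effect{\DUP}(L_i)|$, a bounded integer recording how far to the right the $\effect{\DUP}$-mass sits: a slope-tie \RulePL\ pushes up-mass strictly leftward while a slope-tie \RuleNR\ pushes down-mass strictly rightward, and both make $\Lambda$ strictly drop.

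The genuinely delicate point, and the one I expect to cost the most care, is the degenerate type-$\TypeUU$ loops with $\effect{\DUP}(L_i)=0$ (slope $\infty$): they are invisible to $\Phi$, $\Psi$ and $\Lambda$, yet \RulePL\ can still shuffle their multiplicities. I would handle them with the least-significant component $\vec G$, the tuple of their counts listed from the rightmost occurrence to the leftmost and compared lexicographically. The key facts are that the balance condition forces such a loop to pair only with another zero-$\effect{\DUP}$ up-loop (pairing with a finite up-loop would force $x=0$ or $y=0$), that \RulePR\ and \RulePN\ can never feed mass into them so their mass can only flow leftward, and that \RulePL\ therefore strictly lowers the count of the higher-indexed partner while leaving all counts to its right untouched, i.e.\ a lexicographic decrease of $\vec G$. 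It then remains to verify, rule by rule, that every application strictly decreases $\mu$ in the lexicographic order: \RulePN\ at the first coordinate, \RulePR/\RuleNL\ and slope-strict \RulePL/\RuleNR\ at the second, slope-tie \RulePL/\RuleNR\ at the third, and zero-mass \RulePL\ at the fourth, with all earlier coordinates left unchanged in each case. Because along any descending $\mu$-chain the naturals-valued $\Phi$ is eventually constant, then the bounded-below $-\Psi$, then the bounded $\Lambda$, and finally $\vec G\in\N^{s}$ cannot descend forever, $\mu$ witnesses termination. The main obstacle is exactly this exhaustive case analysis coupled with the separate treatment of the slope-$\infty$ up-loops; everything else reduces to the routine linearity of effects along a path.
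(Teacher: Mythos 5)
Your proof is correct, but it takes a genuinely different route from the paper's. Both arguments share the same first step: the support of a sane path (which loops occur, in which order, with the interleaving acyclic pieces) is invariant, since no rule drives a multiplicity to zero, so termination reduces to a well-founded ranking on the exponent vector. From there the paper stays purely combinatorial: it fixes a linear order $\prec$ on the loops of the decomposition --- in effect, for type-$\TypeUU$ loops decreasing slope with ties broken left-to-right, and for type-$\TypeDD$ loops increasing slope with ties broken right-to-left --- and shows in one uniform claim that every rule application strictly decreases the exponent tuple sorted by $\prec$ under the lexicographic order on $\N^{k+1}$ (for \RulePN\ both affected exponents drop, so no cross-type comparison is needed). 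Your ranking $\mu=(\Phi,-\Psi,\Lambda,\vec G)$ is instead arithmetic: conservation of $\effect{\DUP}$-mass under the four slope-shifting rules with strict loss under \RulePN, the monotone $\SP$-potential $\Psi$, a positional term $\Lambda$ for slope ties, and a lexicographic component $\vec G$ for the degenerate slope-$\infty$ loops --- and you correctly identify that the balance condition confines those degenerate loops to $\infty$-$\infty$ pairings under \RulePL, the one case your first three components cannot see; in the paper's ordering this case is absorbed silently as an equal-slope tie. What your approach buys is quantitative content: each component's range is explicitly bounded in terms of the initial path (the number of \RulePN\ applications is at most the initial $\Phi$, strict $\Psi$-increases are bounded via the slope gaps, etc.), so it could yield explicit termination bounds, whereas the paper's argument is shorter and handles all five rules with a single claim. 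One small blemish: your parenthetical formula that each slope-shifting rule changes $\Psi$ by $w\cdot(S_{\mathrm{target}}-S_{\mathrm{source}})$ has the sign reversed for \RuleNL\ and \RuleNR\ (since $\effect{\DUP}<0$ there, the change is $w\cdot(S_{\mathrm{source}}-S_{\mathrm{target}})$); your stated conclusion that $\Psi$ never decreases and increases strictly unless the slopes are equal remains correct in all four cases, because the slope side conditions of the $\TypeDD$ rules are mirrored accordingly.
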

\begin{proof}
Consider a $\pi$ to wich we apply the rewriting rules. W.l.o.g.~assume
$\pi$ is sane, as otherwise no rule is applicable by definition.
The \label{def:decompositions}
\emph{decomposition} of $\pi$ is the sequence
\begin{equation}
    \label{eq:witness-decomposition}
    Dec(\pi) =
    (\pi_0,L_0,l_0)
    (\pi_1,L_1,l_1)
    \dots
    (\pi_k,L_k,l_k)
    \pi_{k+1}
\end{equation}
in $(E^*\x E^*\x\N)^*E^*$
such that $\pi=\pi_0L_0^{l_0}\pi_1L_1^{l_1}\dots\pi_kL_k^{l_k}\pi_{k+1}$, where $k\le F_0$ and
for all indices $0\le i \le k$,
\begin{enumerate}
    \item $L_i$ is a loop,
    \item $\pi_i$ is acyclic, 
    \item for any two transitions $t\in \pi_i$ and $t'\in L_i$ with $target(t)=target(t')$
        holds that $target(L_i)=target(t)$.
\end{enumerate}
The last condition demands that any loop $L_i$ shares exactly one node with the
acyclic path $\pi_i$ it succeeds and thus ensures that the decomposition
of a path is unique.

As no application of a rule completely removes all occurrences of loops nor
introduces new ones nor touches the intermediate paths, we observe that
rule applications only change the exponents $l_i$ in the decomposition of the path.

Based on the order of loops in the decomposition of $\pi$, and their potential
for rule application, we now define a notion of \emph{weights} for paths,
and show that these weights have to strictly decrease along a well-order whenever
a rule is applied.

Let $(L_0,L_1,\dots,L_k)$ be the sequence of loops that occur
in the decomposition of $\pi$.
Let us fix some linear order $\prec$ on $\{L_0,L_1,\dots,L_k\}$ that satisfies the following
conditions for any two different loops $L_i,L_j$ with $i<j$.
\begin{enumerate}
    \item\label{cond:PL} If $Type(L_i)=Type(L_j)=\TypeUU$ and $S(L_i)\ge S(L_j)$ then $L_i\prec L_j$.
    \item\label{cond:PR} If $Type(L_i)=Type(L_j)=\TypeUU$ and $S(L_i)<S(L_j)$ then $L_i\succ L_j$.
    \item\label{cond:NL} If $Type(L_i)=Type(L_j)=\TypeDD$ and $S(L_i)<S(L_j)$ then $L_i\prec L_j$.
    \item\label{cond:NR} If $Type(L_i)=Type(L_j)=\TypeDD$ and $S(L_i)\ge S(L_j)$ then $L_i\succ L_j$.
\end{enumerate}
\newcommand{\SOL}{\sigma}
Surely, such a linearization exists, as the conditions above
only restrict $\prec$ between loops of the same type
and slopes are linearly ordered.
Consider the permutation $\SOL:\N_{\le k}\to\N_{\le k}$
given by $\SOL(i)<\SOL(j)\iff L_i\prec L_j$.
The \emph{weight} of $\pi$ is
\begin{equation}
    W(\pi)=(l_{\sigma(k)},l_{\sigma(k-1)},\dots,l_{\sigma(0)}) \in \N^{k+1}.
\end{equation}
The weight of $\pi$ is the ordered tuple of exponents $l_i$ of loops
that occur in $\pi$. Because rules do not change the order
of loop occurrences, the path before and after applying a rule
have comparable weights. The very definition of weights ensures
that rule applications must strictly reduce the weight of a path.
%
\begin{claim}\label{lem:wellordering}
    If $\rho$ is the result of applying a rewriting rule
    to $\pi$ then $W(\rho) \sqsubset_{lex} W(\pi)$ where
    $\sqsubset_{lex}$ is the lexicographic extension of 
    the pointwise ordering of tuples of naturals.
\end{claim}
    Assume the decompositions of $\pi,\rho$ are
    \begin{equation}
    \begin{aligned}
      Dec(\pi) &= (\pi_0,L_0,l_0)(\pi_1,L_1,l_1)\dots(\pi_k,L_k,l_k)\pi_{k+1} \text{ and}\\
      Dec(\rho) &= (\pi_0,L_0,l_0')(\pi_1,L_1,l_1')\dots(\pi_k,L_k,l_k')\pi_{k+1}.
    \end{aligned}
    \end{equation}
    %
    We show for every type of rule that if the occurrences
    of loop $L_i$ increase then those of some loop $L_j$ with
    $L_i\prec L_j$ strictly decrease.

    If the rule used to derive $\rho$ was \RulePL\ then
    $l_i'=l_i+x$ and 
    $l_j'=l_j-y$ for some $i<j$, $0<x,y$ and type $\TypeUU$ loops
    $L_i,L_j$ with $S(L_i)\ge S(L_j)$.
    By condition \ref{cond:PL}) in the definition of $\prec$
    we get $L_i\prec L_j$. 

    For rule \RulePR\ we know $l_i'=l_i-x$ and 
    $l_j'=l_j+y$ for some $0<x,y$ and type $\TypeUU$ loops
    $L_i,L_j$ with $S(L_i)<S(L_j)$.
    By condition \ref{cond:PR}) in the definition of $\prec$,
    we get $L_i\succ L_j$.

    For rule \RuleNL\ we know $l_i'=l_i+x$ and 
    $l_j'=l_j-y$ for type $\TypeDD$ loops
    $L_i,L_j$ with $S(L_i)<S(L_j)$.
    By condition \ref{cond:NL}) in the definition of $\prec$,
    we know $L_i\prec L_j$.
    %

    For rule \RuleNR\ we know $l_i'=l_i-x$ and 
    $l_j'=l_j+y$ for some $0<x,y$ and type $\TypeDD$ loops
    $L_i,L_j$ with $S(L_i)>S(L_j)$.
    So condition \ref{cond:NR}) in the definition of $\prec$,
    implies $L_i\succ L_j$.
    %

    Lastly, if the rule used to derive $\rho$ was \RulePN\ we immediately see that $l_i'<l_i$ and
    $l_j'<l_j$, which implies the claim.

%
\qed
\end{proof}

\Cref{lem:witness_preservation,lem:rules_wqo} allow us to focus on witnesses that are \emph{reduced}, i.e.,
which are sane and to which none of the rewriting rules is applicable. 
We can now derive bounds on the multiplicities of loops
in reduced paths.

\begin{restatable}[]{lemma}{lembounds}
\label{lem:bounds}
    Let
    $\pi=\pi_0L_0^{l_0}\pi_1L_1^{l_1}\pi_2$
    be a reduced path
    where
    $L_0,L_1$ are loops occurring with multiplicities
    $l_0>0$ and $l_1>0$. 
    \begin{enumerate}
        \item\label{lem:bounds:PL}
            If $Type(L_0)=Type(L_1)=\TypeUU$ and $S(L_0)\ge S(L_1)$
            then $l_1\le |V|$
        \item\label{lem:bounds:PR}
            If $Type(L_0)=Type(L_1)=\TypeUU$ and $S(L_0) < S(L_1)$
            then $l_0\le |\pi_1| + 2|V|$
        \item\label{lem:bounds:NL}
            If $Type(L_0)=Type(L_1)=\TypeDD$ and $S(L_0)<S(L_1)$
            then $l_1< |V|^2 + 2|\pi_1|$
        \item\label{lem:bounds:NR}
            If $Type(L_0)=Type(L_1)=\TypeDD$ and $S(L_0)\ge S(L_1)$
            then $l_0< |V|$
        \item\label{lem:bounds:PN}
            If $Type(L_0)=\TypeUU$, $Type(L_1)=\TypeDD$ and $S(L_0) \le S(L_1)$
            then $l_0\le |\pi_1|+|V|$ or $l_1\le |V|$.
    \end{enumerate}
\end{restatable}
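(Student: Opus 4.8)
The plan is to prove each of the five bounds by contraposition against the matching rewriting rule: since $\pi$ is reduced, none of \RulePL, \RulePR, \RulePN, \RuleNL, \RuleNR\ is applicable to the decomposition $\pi_0L_0^{l_0}\pi_1L_1^{l_1}\pi_2$, and each bound is precisely what this non-applicability forces. First I would record the arithmetic fact that lets us pick small shift parameters: since every loop has length at most $|V|$, we have $|\effect{\DUP}(L_i)|\le|V|$, so the Diophantine equation $\effect{\DUP}(L_0)\cdot x = \effect{\DUP}(L_1)\cdot y$ (resp. $\effect{\DUP}(L_0)\cdot x = -\effect{\DUP}(L_1)\cdot y$ in the mixed case \RulePN) admits a minimal positive solution $(x_0,y_0)$ with $x_0\le|\effect{\DUP}(L_1)|\le|V|$ and $y_0\le|\effect{\DUP}(L_0)|\le|V|$, obtained by dividing each effect by $\gcd(\effect{\DUP}(L_0),\effect{\DUP}(L_1))$.

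With this in hand each case is a one-line deduction. In every case the type and slope hypotheses of the item coincide with the type and slope conditions of the matching rule, and $(x_0,y_0)$ satisfies its equation condition; hence reducedness means that the remaining \emph{magnitude} condition of that rule must fail for $(x_0,y_0)$. For item~1 (rule \RulePL) that condition is $l_1-y>0$, whose failure gives $l_1\le y_0\le|V|$. For item~2 (rule \RulePR) it is $l_0-x>|\pi_1L_1|$, whose failure gives $l_0\le|\pi_1L_1|+x_0\le|\pi_1|+2|V|$. For item~4 (rule \RuleNR) it is $l_0-x>0$, giving $l_0\le x_0\le|V|$. For item~3 (rule \RuleNL) there are two magnitude conditions, $l_1>|L_0|\cdot x+2|\pi_1|$ and $l_1-y>0$; at least one fails, giving $l_1\le|L_0|\cdot x_0+2|\pi_1|\le|V|^2+2|\pi_1|$ or $l_1\le y_0\le|V|$, both within the claimed bound. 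Finally for item~5 (rule \RulePN) the conditions $l_0-x\ge|\pi_1|$, $l_0-x>0$ and $l_1-y>0$ cannot all hold; failure of one of them yields $l_0<|\pi_1|+x_0$, or $l_0\le x_0$, or $l_1\le y_0$, i.e. the disjunction $l_0\le|\pi_1|+|V|$ or $l_1\le|V|$.

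The step I expect to be the main obstacle is the degenerate case of type $\TypeUU$ loops whose $\DUP$-effect vanishes (pure $\SP$-pumps, of slope $+\infty$): there the equation $\effect{\DUP}(L_0)\cdot x=\effect{\DUP}(L_1)\cdot y$ has no positive solution, so the rule-based argument does not directly apply. I would first note that the slope hypotheses already exclude this in several subcases: in item~5 an infinite $S(L_0)$ together with $S(L_0)\le S(L_1)$ would force $S(L_1)=\infty$, impossible for a $\TypeDD$ loop, so $\effect{\DUP}(L_0)>0$ there; and in item~2 an infinite $S(L_0)$ is incompatible with $S(L_0)<S(L_1)$. The genuinely remaining subcase (a pure $\SP$-pump $L_0$ preceding a $\DUP$-increasing $\TypeUU$ loop $L_1$ in item~1) I would settle directly from the witness property rather than from the rules: a witness exhausts the counter of $\DUP$, so the total $\DUP$-effect of the whole path equals $-n\le 0$, while the only $\DUP$-decreasing contributions come from the acyclic pieces, so bounding those by their lengths yields a polynomial bound on $l_1$ of the same order as $|V|$. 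I would also double-check the boundaries between strict and non-strict inequalities so that the stated bounds (e.g. the strict $l_0<|V|$ in item~4) come out exactly; this is routine once the minimal solution $(x_0,y_0)$ is fixed.
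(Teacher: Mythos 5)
Your main argument is exactly the paper's proof. The paper likewise fixes a positive solution of the effect equation bounded by $|V|$ (it takes $x=\effect{\DUP}(L_1)$, $y=\effect{\DUP}(L_0)$ and then the smallest positive pair) and derives each of the five bounds by contradiction from the non-applicability of the matching rule, with the same one-line deductions you give, including treating the several magnitude conditions of \RuleNL\ and \RulePN\ disjunctively. Modulo the strict/non-strict boundary issues you flag (which are present in the paper's own write-up as well), your generic case analysis is correct and matches the intended proof.

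The degenerate case you isolate is real, and you deserve credit for spotting it: the paper's chosen pair is positive only when both $\DUP$-effects are nonzero, while type $\TypeUU$ permits $\effect{\DUP}=0$ (slope $\infty$), so the paper's proof silently skips exactly the situation you describe. However, your treatment of it has two genuine problems. First, your exclusion argument for item~2 only rules out $S(L_0)=\infty$; the symmetric subcase $\effect{\DUP}(L_1)=0$ with $\effect{\DUP}(L_0)>0$ is compatible with $S(L_0)<S(L_1)=\infty$, and there the equation $\effect{\DUP}(L_0)\cdot x=\effect{\DUP}(L_1)\cdot y$ again has no positive solution, so item~2 is left unproved (and, for the same reason as below, is actually problematic) in that subcase. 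Second, your fallback for item~1 does not work: the lemma is stated for reduced \emph{paths}, so the witness property ("the $\DUP$-counter is exhausted") is not among its hypotheses; and even restricted to witnesses, the premise that the only $\DUP$-decreasing contributions come from acyclic pieces is false, since $\pi_2$ may itself contain $\TypeDD$ loops iterated unboundedly often (as they are in witnesses of form~2 of \cref{thm:form}), which can absorb an arbitrarily large positive contribution $l_1\cdot\effect{\DUP}(L_1)$. Concretely, with $L_0$ a pure $\SP$-pump of effect $(+1,0)$, $L_1$ of effect $(+2,+1)$ and a later loop $L_2$ of effect $(-1,-1)$, no rewriting rule is ever applicable, yet one can arrange reduced witnesses with $l_1$ arbitrarily large; so the global effect-balance accounting you sketch cannot yield the claimed bound, and the slope-$\infty$ subcases call for a repair of the statement itself (or a separate treatment of such loops in the proof of \cref{thm:form}) rather than a patch inside this lemma.
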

\begin{proof}
    The fourth condition of any rule
    is satified e.g.~by $x=\effect{\DUP}(L_1)$ and $y=\effect{\DUP}(L_0)$.
    So if $0<x,y\in\N$
    is the smallest satisfying pair we know $x,y\le |V|$.
    The bounds are now easily derived by contradiction:
    \begin{enumerate}
      \item If $l_1\ge |V|$ then $l_1-y\ge l_1-|V|>0$ 
          and rule \RulePL\ is applicable.
      \item If $l_0> |\pi_1| +2|V|$ then $l_0-x > |\pi_1| + 2|V| - x \ge |\pi_1| + |L_1| \ge |\pi_1L_1|$
          and therefore rule \RulePR\ is applicable.
      \item If $l_1\ge |V|^2+2|\pi_1|$ then $l_1\ge |L_0|\cdot x + 2|\pi_1|$
          and $l_1-y\ge l_1-|V|>0$, so rule \RuleNL\ is applicable.
      \item If $l_0>|V|$ then $l_0-x>0$, so rule \RuleNR\ is applicable.
      \item If $l_1>|V|$ and $l_0>|\pi_1|+|V|$, then $l_1-y>0$, $l_0-x>0$ and $l_0-x>|\pi_1|$,
          so rule \RulePN\ is applicable.
    \end{enumerate}
%
%
%
%
    In each case we conclude that one of the rules 
    is applicable,
    which contradicts the assumption that $\pi$ is reduced.
    \qed
\end{proof}

%
Finally, we are ready to prove \cref{thm:form}.
\thmform*
\begin{proof}
    We show that we can sufficiently increase the bound $\fbound$
    such that whenever $T(pm)\not\subseteq T(qn)$ but no witness
    exists that is shorter than $\fbound$ or of forms 1) or 2),
    then there must be a witness of form 3).

    Assume $T(pm)\not\subseteq T(qn)$ and consider a reduced witness $\pi$,
    that is minimal in length: no shorter witness is reduced.
    Recall that this also means that $\pi$ is sane: it is of form
    described in \cref{eq:form:sane}.
    By monotonicity (\cref{lem:witness-monotonicity}) and because $\pi$ is of minimal length
    among the reduced witnesses, we see that it cannot contain loops of type $\TypeDU$.
    Moreover, because $\pi$ is not of form 1),  we can safely assume that $\pi$
    it contains
    only loops of types $\TypeUU$ and $\TypeDD$.
    This is because if a witness contains two or more different type $\TypeUD$
    loops, then there exists another (sane) witness, that only unfolds the first
    such loop.
    Relaxing the bound on the length of paths between loops to
    $F_1:=F_0(2|V|+|V|^2)$, we can write $\pi$ as
    \begin{equation}
        \label{eq:form}
        \pi=\pi_0L_0^{l_0}\pi_{1}L_{1}^{l_1}\dots\pi_kL_k^{l_k}\pi_{k+1}
    \end{equation}
    where $k\le F_0$, all 
    $|\pi_i| < F_1$ and the number of iterations of loop $L_i$ is $l_i> |V|$.

    Consider a block $\pi_{\mathit{pos}}=L_i^{l_i}\pi_{i+1}L_{i+1}^{l_{i+1}}\pi_{i+2}\dots\pi_jL_j^{l_j}$
    that is part of the decomposition above, 
    %
    such that all loops are type $\TypeUU$.
    If
    for indices $i\le x<y\le j$ we have
    $S(L_x) \ge S(L_y)$,
    then by
    \cref{lem:bounds}.\ref{lem:bounds:PL} we get $l_y \le |V|$.
    Therefore, $\pi_{\mathit{pos}}$ can be rewritten to the form
    \begin{equation}
        \pi_{\mathit{pos}}=L_i^{l_i}\pi_{i+1}L_{i+1}^{l_{i+1}}\pi_{i+2}\dots\pi_jL_j^{l_j}\pi_{j+1}
    \end{equation}
    where the lengths of $\pi_i$ are bounded by $F_2 := F_0\cdot(|V|^2 + F_1)$ and
    the slopes of loops are strictly increasing:
    $S(L_x) < S(L_y)$ for any two indices $i\le x<y\le j$.
    By \cref{lem:bounds}.\ref{lem:bounds:PR}
    this means that
    $l_x\le |\pi_{x+1}|+2|V| \le F_2 + 2|V| =: F_3$.
    We conclude that the prefix
    $\pi'=L_i^{l_i}\pi_{i+1}L_{i+1}^{l_{i+1}}\pi_{i+2}\dots\pi_{j-1}L_{j-1}^{l_{j-1}}$
    is no longer than $(j-i)\cdot (|V|\cdot F_3+ F_2)$
    and therefore
    \begin{equation}
        \label{eq:squeezing-form-pos}
        \pi_{\mathit{pos}} = \pi'L_j^{l_j}\pi_{j+1}
    \end{equation}
    where $|\pi'|$ is bounded by $F_4:= F_0(|V|\cdot F_3+ F_2)$
    and $|\pi_{j+1}|$ by $F_2$.

    We continue to show by a similar argument
    that we can bound the number of iterations of all but the most productive loop
    in a block consisting of only decreasing (type $\TypeDD$) loops.
    Consider a block $\pi_{\mathit{neg}}=L_i^{l_i}\pi_{i+1}L_{i+1}^{l_{i+1}}\pi_{i+2}\dots\pi_jL_j^{l_j}$
    that is part of the decomposition in \cref{eq:form},
    where all loops are type $\TypeDD$.
    %
    If
    $S(L_x) \ge S(L_y)$
    for some indices $i\le x<y\le j$, then by
    \cref{lem:bounds}.\ref{lem:bounds:NR} we know $l_y < |V|$.
    This means that $\pi_{\mathit{neg}}$ is of the form
    \begin{equation}
        \pi_{\mathit{neg}}=\pi_iL_i^{l_i}\pi_{i+1}L_{i+1}^{l_{i+1}}\pi_{i+2}\dots\pi_jL_j^{l_j}\pi_{j+1}
    \end{equation}
    where all $\pi_i$ have lengths bounded by $F_2$ and
    $S(L_x) < S(L_y)$ for any two indices $i\le x<y\le j$.
    By \cref{lem:bounds}.\ref{lem:bounds:NL}
    we get $l_y\le |V|^2 + 2|\pi_{x}| \le |V|^2 + 2F_2=: F_3'$ and
    conclude that the suffix
    $\pi''=\pi_{i+1}L_{i+1}^{l_{i+1}}\pi_{i+2}\dots\pi_{j}L_{j}^{l_{j}}\pi_{j+1}$
    is no longer than $(j-i)\cdot (|V|\cdot F_3'+F_2)$.
    Therefore, $\pi_{\mathit{neg}}$ is of the form
    \begin{equation}
        \label{eq:squeezing-form-neg}
        \pi_{\mathit{neg}} = \pi_iL_i^{l_i}\pi''
    \end{equation}
    where $\pi_i$ is bounded by $F_2$ and $\pi''$ by $F_4':= F_0(|V|\cdot F_3'+ F_2)$.

    \cref{eq:squeezing-form-pos,eq:squeezing-form-neg} characterize the
    form of maximal subpaths of the witness $\pi$ in \cref{eq:form},
    along which the type of loops does not change.
    They allow us to write $\pi$ as
    \begin{equation}\label{eq:form:alt}
        \pi=\pi_0L_0^{l_0}\pi_1L_{1}^{l_1}\pi_2\dots\pi_kL_k^{l_k}\pi_{k+1}
    \end{equation}
    where for all indices $0\le i<k$: 
    \begin{enumerate}
        \item $\pi_i$ is no longer than $F_5:=F_3+F_3'+F_4+F_4'$.
        \item All $l_i> |V|$.
        \item Consecutive loops $L_i$ and $L_{i+1}$ have different types.
        \item If loops $L_i,L_j$ for $0\le i<j\le k$ have the same type then
            $S(L_i)<S(L_j)$.
    \end{enumerate}
    In the remainder of this proof, we further increase the polynomial bound for
    the gaps $\pi_i$ between the loops;
    this allows to conclude that $\pi$ contains at least one type $\TypeDD$ loop
    and finally, that $\pi$ is of form 3).

    Observe that if all loops $L_i$ in \cref{eq:form:alt} are of type
    $\TypeUU$ then the witness is already of form
    $\pi=\pi_0L^l\pi_1$
    as in \cref{eq:squeezing-form-pos},
    where $\pi_0,\pi_1$ are short and $L$ is the most effective loop.
    In this case,
    consider the run
    \begin{equation}
        (pm,qn)\step{\pi_0L^{l}}(p'm',q'n')
    \end{equation}
    induced by the prefix $\pi_0L^l$.
    Because $\NN{B}$ is complete we know
    $\effect{\DUP}(\pi)=-n$.
    together with $\effect{\DUP}(\pi_1) \le |\pi_1|\le F_5$
    we get $n'\le F_5$.
    Because $\guard{\SP}(\pi_1)\le|\pi_1|$,
    we know that $l\le |\pi_1|\le F_5$
    as otherwise, fewer iterations $l$ would result in
    a shorter witness and we assumed $\pi$ to be minimal in length.
    Hence, we could bound $\pi$ by $F_6:= F_5+|V|\cdot F_5 + F_5$.
    So if we let $\fbound\ge F_6$, our witness $\pi$ must contain type $\TypeDD$ loops
    as it is assumed not to be no shorter than $\fbound$.

    Finally, fix an index $0\le x \le k$ such that in \cref{eq:form:alt},
    $L_x$ is a loop of type $\TypeDD$ with most efficient decrease (minimal slope). 
    That is, $\pi$ is of form
    \begin{equation}\label{eq:form:4}
      \pi=\pi_0L_x^{l_x}\pi_1.
    \end{equation}
    We now bound both $\pi_0$ and $\pi_1$ and thereby prove that $\pi$ is of form 3).
    We start with the suffix $\pi_1$.

    If $L_x$ is the only loop of type $\TypeDD$, we are done as then $|\pi_1|\le F_5$.
    Suppose we have two indices $0\le y<y+2\le k$, where both $L_y$ and $L_{y+2}$ are type $\TypeDD$.
    This means that $L_{y+1}$ is of type $\TypeUU$ with $S(L_{y+1}) < S(L_{y+2})$.
    By \cref{lem:bounds}.\ref{lem:bounds:PN} and the
    fact that $l_{y+2}> |V|$ we know that
    $l_{y+1}<|\pi_{y+1}|+ |V|\le F_6$.
    So $\pi_{y+1}L_{y+1}^{l_{y+1}}\pi_{y+2}$ is no longer than
    $2\cdot F_5+|V|\cdot F_6=:F_7$.
    Applying \cref{lem:bounds}.\ref{lem:bounds:NL}
    to $L_y$ and $L_{y+2}$ we get
    $l_{y+2}\le|V|^2+2\cdot F_7=:F_8$
    and thus $\pi_{y+1}L_{y+1}^{l_{y+1}}\pi_{y+2}L_{y+2}^{l_{y+2}}$ is no longer than
    $F_9:= F_5 + (|V|\cdot F_6) + F_5 + (|V|\cdot F_8)$.
    Now the above argument can be repeated for any successive pair of type $\TypeDD$ loops in
    $\pi_1$ of which there are at most $F_0$. So,
    $|\pi_1|<F_0\cdot F_9$.

    To bound the prefix $\pi_0$ in \cref{eq:form:4},
    we recall (point 3 above) that consecutive loops in \cref{eq:form:alt}
    have different types and therefore $x\leq 1$.
    In case $x=0$, we immediately get $|\pi_0| \le F_5$.
    If $x=1$, then $L_0$ is a type $\TypeUU$ loop with
    $S(L_0) < S(L_x)$ and so
    by \cref{lem:bounds}.\ref{lem:bounds:PN}
    and point 2), we get $l_0\le |\pi_1|+|V|< F_6$.
    This means $|\pi_0|\le 2F_5+ |V|\cdot F_6=F_7$.

    We conclude that $\fbound:=F_9\cdot F_0$ is sufficient to ensure that any
    witness $\pi$, longer than $\fbound$ which is not of form 1) or 2) must have form 3).
    This completes our argument for the existence of witnesses in the claimed
    forms.

    To see why $l_0$ and $l_1$ can always be bounded polynomially
    in $|V|$ and $m'$ can be seen by looking at the types of the loops involved.
    For paths of form 1 and 3, $L_0$ decreases
    the counter on the right at least once in every iteration.
    Since the value $m'+\deffect{\pi_0}$ before the first iteration is at most
    $m'+c$, we have $l_0\le m'+c$.

    Paths of the second form can be decomposed into a prefix $\pi_0L_0^{l_0}$
    and a suffix $\pi_1L_1^{l_1}\pi_2$, which is a path of form 3.
    Let $y_0\in\N$ be minimal such that the effect
    of the path $\gamma_0=\pi_0L_0^0\pi_1L_1^{y_0}\pi_2$,
    in which $L_0$ is not
    iterated at all is sufficient to reduce the initial value $m'$ below $0$.
    That is, we have $m'+\deffect{\pi_0L_0^0\pi_1L_1^{y_0}\pi_2}\le 0$.
    Note that as for forms 1 and 3, we can bound $y_0$ by $m'+2c$
    and therefore, $|\gamma_0|$ is no larger than $3c +|V|\cdot (m'+2c)$.
    This path might not be a witness because
    it is not enabled on the left side.
    However, because of the condition on the slopes, there are $x,y\le |V|$ such
    that the effect of the loops satisfy
    \begin{align}
        \deffect{L_0}\cdot x = - \deffect{L_1}\cdot y
        \qquad\text{and}\qquad
        \seffect{L_0}\cdot x > - \seffect{L_1}\cdot y.
    \end{align}
    This means, increasing the iterations of the loops $L_0$ and $L_1$
    by $x$ and $y$, respectively,
    does not change the effect of the path on the right,
    but strictly increases the effect on the left.
    We increase the iterations $(l_0,l_1)=(0,y_0)$ in $\gamma_0$
    as suggested above for
    $\sguard{\gamma_0}<|\gamma_0|<3c +|V|\cdot (m'+2c)$ times.
    The resulting path $\gamma_1=\pi_0L_0^{x_1}\pi_1L_1^{y_1}\pi_2$
    is then surely witness, and iterates the loops
    not more than $x_1=3c +|V|\cdot (m'+2c)$
    and $y_1=m'+5c +|V|\cdot (m'+2c)$ times.
    \qed

\end{proof}
%


%
%
\end{document}